\DeclareMathOperator{\erf}{erf}
\DeclareMathOperator{\erfc}{erfc}
\newtheorem{mythe}{Theorem}[section]
\newtheorem{remark}{Remark}[section]
\title{Symmetry-preserving random batch Ewald method for constant-potential simulation of electrochemical systems}
\author[1,2]{Weihang Gao}
\author[2]{Qi Zhou\thanks{zhouqi1729@sjtu.edu.cn}}
\author[3]{Qianru Zhang}
\author[2]{Zhenli Xu\thanks{xuzl@sjtu.edu.cn}}
\affil[1]{Research Institute of Petroleum Exploration \& Development, PetroChina Corporation Limited, Beijing 100083, China}
\affil[2]{School of  Mathematical  Sciences, MOE-LSC and CMA-Shanghai, Shanghai  Jiao  Tong  University,  Shanghai 200240,  China}
\affil[3]{Hangzhou Huawei Enterprise Communication Technology Co., Ltd., Hangzhou 518129, China}
\date{}
\begin{document}

\maketitle

\begin{abstract}
Constant potential molecular dynamics simulation plays important role for applications of electrochemical systems, yet the calculation of charge fluctuation on electrodes remains a computational bottleneck. We propose a highly scalable, symmetry-preserving random batch Ewald (SRBE) algorithm to address this challenge. The SRBE algorithm deterministically computes the low-frequency components along the direction perpendicular to electrodes, while efficiently approximating the remaining components using random batch sampling. This approach simultaneously reduces charge and force fluctuations while satisfying the symmetry-preserving mean field condition in anisotropic systems with large aspect ratios. Numerical experiments on electrode/ionic liquid systems validate the high accuracy of the SRBE method in capturing dynamic charging processes and equilibrium electric double layer structures. The SRBE method achieves parallel efficiency improvements of up to two orders of magnitude compared with conventional FFT-based algorithms. These findings highlight its strong potential for enabling large-scale electrochemical simulations and its broad applicability to practical problems in the field.

{\bf Keywords:} Molecular dynamics simulations, electrostatic interactions, electrochemical interface, random batch method, variance reduction.

{\bf AMS subject classifications:} 	
82M37, 65C35, 65T50, 65Y20.

\end{abstract}

\section{Introduction}
\label{sec::intro}
Electrochemical interfaces play a pivotal role in a wide range of physical and chemical processes, including electrolyte–electrode interactions \cite{sundararaman2022improving}, catalytic reactions \cite{stamenkovic2017energy}, and battery charge–discharge cycles \cite{mora2024high}.
Simulating charge distribution and potential variations at these interfaces is essential for understanding reaction efficiency, which largely depends on accurate modeling of charge transfer dynamics in molecular dynamics (MD) simulation. Typically, MD simulation can be grouped into constant charge (CQ) or constant potential (CP) simulations \cite{wang2014evaluation,yang2017reliability}. The CQ method assumes a fixed charge distribution, and offers high computational efficiency for static systems \cite{tee2023constant,nickel2024water}. However, this static assumption fails in dynamic environments involving charge transfer or fluctuating electrode potentials, where it cannot adequately capture the responsive electronic behavior in the systems \cite{bedrov2019molecular}. 
In contrast, CP simulations enforce a constant electrode potential, providing a more realistic depiction of charge redistribution and interfacial dynamics during electrochemical processes \cite{jeanmairet2022microscopic}.
Their ability to resolve electric double layer structures and capture nuanced charging behavior—especially under challenging conditions such as electrode-induced nanoconfinement, high electrolyte concentrations, and organic solvents—has driven major advances in the molecular-level understanding of electrochemical phenomena \cite{coles2019simulation,zeng2024constant}. Notably, CP methods are uniquely capable of modeling potential-dependent phenomena such as modeling of voltage-driven devices \cite{bonnet2012first}, reaction mechanism of CO$_2$ reduction on the electrochemical interface \cite{cheng2017full}, and potential-regulated atom cluster formation \cite{zhou2025constant}, which are inaccessible within the fixed-charge framework.
These simulations are critical for probing systems where electronic polarization and explicit potential control govern the behavior \cite{yu2023constant}, such as during faradaic reactions in batteries, potential-dependent electrocatalyst selectivity, and voltage-gated ion transport in nanoporous electrodes. Consequently, CP methods have become indispensable for investigating interfacial charge dynamics in diverse applications, including batteries, supercapacitors, and electrocatalytic systems \cite{merlet2013simulating,noh2019understanding,chen2020adding}.

The primary computational bottleneck in CP simulations is the substantial cost of continuously adjusting electrode charges at each timestep to maintain a fixed potential. Over the years, several approaches have been developed to address this challenge, including the Green function method \cite{raghunathan2007self,zeng2023molecular}, the induced charge computation (ICC*) method \cite{tyagi2010iterative}, and the charge fluctuation method \cite{siepmann1995influence,reed2007electrochemical}. Particularly, the charge fluctuation method has been successfully implemented in major software packages such as Metalwalls \cite{marin2020metalwalls}, LAMMPS \cite{tee2022fully,ahrens2022electrode}, and GROMACS \cite{bi2020molecular}. While acceleration techniques like the particle-particle particle-mesh (PPPM) method \cite{Hockney1988Computer} offer high accuracy and efficiency in electrostatic computing, they suffer from the communication bottlenecks that hinder large-scale simulations \cite{arnold2013comparison,ayala2021scalability}. 

To overcome this issue, the random batch Ewald (RBE) algorithm was recently proposed as a promising alternative \cite{jin2021random}. In the Fourier far-field part, the RBE method uses importance sampling approximations to replace the communication-intensive FFT procedure, thereby achieving efficient $O(N)$ complexity simulations. It also demonstrates excellent parallel scalability and has been successfully validated in various homogeneous systems \cite{liang2022superscalability,irbe2022jpca,gao2024rbmd}. Recent advancements have extended this approach to various physical applications, including random batch Monte Carlo method \cite{li2020random}, random batch sum-of-Gaussians method \cite{liang2023random,chen2025random}, and random batch list method \cite{zhang2025random,liang2021randomRBL,xu2024variance}.
However, its effectiveness breaks down for typical electrode/ionic liquid systems under the CP simulation. The inherent anisotropy and vacuum slab geometry common to these setups cause large variance in the stochastic updates such that the RBE method fails to satisfy the symmetry-preserving mean field (SPMF) condition \cite{hu2022symmetry} in the direction perpendicular to the electrode, rendering its direct application inefficient. The symmetry-preserving random batch Ewald (SRBE) method was first introduced for fully periodic systems \cite{gao2023screening}, but its extension to interface systems is not trivial.

In this paper, we develop a novel SRBE algorithm for accurate and efficient CP simulations of anisotropic electrode/ionic liquid systems.  This ``symmetry-preserving'' technique effectively eliminates the large variance that the original random batch methods encounter in anisotropic environments, and ensures the adherence to the SPMF condition \cite{gao2023screening}. The remaining interaction components are then handled efficiently using the conventional random batch sampling. We provide a rigorous theoretical proof demonstrating that, this hybrid approach not only preserves the favorable $O(N)$ linear scaling of the RBE method, but also strategically eliminates the primary source of variance in charge updates. Numerical results show that the SRBE  accurately captures the dynamics of electrode charging and reproduces the equilibrium structure of the electric double layer. Most notably, the SRBE algorithm delivers exceptional performance gains. Compared to the conventional PPPM algorithm, the SRBE achieves a six-fold speedup in weak scalability tests with 343 CPU cores. This advantage becomes even more dramatic in strong scalability tests, where it outperforms the PPPM by two orders of magnitude with 512 CPU cores. These findings establish SRBE as a powerful and robust tool, enabling large-scale, high-fidelity simulations of complex electrochemical systems that were previously computationally prohibitive.

The structure of this paper is as follows. Section~\ref{sec::overview} reviews the basic principles of the charge fluctuation method and the current mainstream PPPM algorithm used for its acceleration. Section~\ref{sec::SRBE} describes our novel SRBE algorithm in detail, together with rigorous theoretical analysis of the variance reduction property and discussion on its complexity. Section~\ref{sec::result} presents the CP simulation results for electrode/ionic liquid systems. Concluding remarks are made in Section~\ref{sec::conclusion}.

\section{Charge fluctuation on constant-potential electrodes}
\label{sec::overview}

In the electrolyte-supercapacitor system, electrolyte ions and  atoms making up the positive and negative electrodes are confined within a rectangular simulation box $\Omega=[0,L_x]\times[0,L_y]\times [0,H]$, denoting $\bm{l}=(L_x,L_y,H)$ the dimension scale. The system is composed of $N_{\text{ion}}$ electrolyte ions inside the box and $N_{\text{ele}}$ electrode atoms, located at $\{\bm{r}_j=(x_j,y_j,z_j)\}$ and $\{\bm{R}_i=(X_i,Y_i,Z_i)\}$, respectively. Assume that each atom is fixed on the electrode carrying a Gaussian charge distribution \cite{reed2007electrochemical}
\begin{equation}
\mathcal{Q}_{i}(\boldsymbol{r})=Q_{i} \left(\frac{1}{\pi\eta^2}\right)^{3/2} \exp \left(-\left|\boldsymbol{r}-\boldsymbol{R}_{i}\right|^{2}/ \eta^{2}\right).
\end{equation}
Here, $Q_{i}$ is the integrated charge and $\eta$ is the bandwidth.
In this system, there is a fixed electrode on both the upper and lower surfaces in the 
$z$-direction, with ions moving within the simulation box. Therefore, this problem is typically modeled as a quasi-2D periodic boundary condition simulation, where periodic boundary conditions are applied in the 
$x$ and $y$ directions, while the $z$-direction remains free. Meanwhile, the system often exhibits anisotropy, with the dimension in the free direction typically being slightly larger than those in the periodic directions. This property can be quantified as $H = \mu \max\{L_x, L_y\}$, where $\mu$ is a system-specific scaling factor greater than 1. In this setup, the total electrostatic potential energy of the system (in dimensionless units) reads 
\begin{equation}
U_{c}=\frac{1}{2} \int_{\Omega}\rho\left(\boldsymbol{r}^{\prime}\right)\mathrm{d} \boldsymbol{r}^{\prime} \int_{\mathbb{R}^2\times[0,H]}\frac{\widetilde{\rho}\left(\boldsymbol{r}^{\prime \prime}\right)}{\left|\boldsymbol{r}^{\prime}-\boldsymbol{r}^{\prime \prime}\right|}\mathrm{d} \boldsymbol{r}^{\prime \prime},
\label{tot_col}
\end{equation}
where 
\begin{equation}
    \label{eq::rho}
    \rho(\boldsymbol{r})=\sum_{i=1}^{N_{\text{ion}}}q_i\delta\left(\boldsymbol{r}-\boldsymbol{r}_{i}\right)+\sum_{j=1}^{N_{\text{ele}}}\mathcal{Q}_j(\bm{r}) 
\end{equation}
denotes the charge distribution in the simulation box and its periodization
\begin{equation}
    \label{eq::rho_peri}    \widetilde{\rho}(\boldsymbol{r})=\sum_{\bm{n}\in \mathbb{Z}^2\times\{0\}}\left[\sum_{i=1}^{N_{\text{ion}}}q_j\delta\left(\boldsymbol{r}-\boldsymbol{r}_{i}-\bm{n}\circ\bm{l}\right)+\sum_{j=1}^{N_{\text{ele}}}\mathcal{Q}_j(\bm{r}-\bm{n}\circ\bm{l})\right] 
\end{equation}
with `$\circ$' denoting the Hadamard product. Note that all the self-interaction at $\bm{r}^{\prime}=\bm{r}^{\prime\prime}$ are discarded in Eq.~\eqref{tot_col}.

During the CP simulation process, charges on the electrode atoms need to be updated at each simulation step to maintain a constant potential difference between the positive and negative electrodes \cite{tee2022fully}. Let $\boldsymbol{Q}=(Q_1,\cdots,Q_{N_{\text{ele}}})$, then the potential on the electrode atoms $\boldsymbol{\Psi}=(\Psi_1,\cdots,\Psi_{N_{\text{ele}}})$ can be represented by $\boldsymbol{\Psi} = \partial U_c/\partial {\boldsymbol{Q}}^T$. The potential energy $U_c$ can be written as the quadratic form of the electrode charge $\boldsymbol{Q}$, expressed by \cite{sitlapersad2024simple}
\begin{equation}
U_{c}(\boldsymbol{Q})=\frac{1}{2} \boldsymbol{Q}^{T} \boldsymbol{A} \boldsymbol{Q}-\boldsymbol{b}^{T} \boldsymbol{Q}+c,
\label{qua_Uc}
\end{equation}  
where $\boldsymbol{A}$ and $\boldsymbol{b}$ are determined by the configuration of electrolyte ions, and will be explicitly provided through the discussion in Section~\ref{sec::overview}. The constant term $c$ is independent of the electrode charges $\bm{Q}$ and does not contribute to the charge update process in CP simulations. Computing the electrode potential using Eq.~\eqref{qua_Uc} results in 
\begin{equation}
\boldsymbol{\Psi} =\boldsymbol{A Q}-\boldsymbol{b}. 
\end{equation}
One can split $\boldsymbol{\Psi}$ into the left and right electrode part, where the potential of the first half of the atoms maintains $\bar{\psi}-\Delta \psi/2 $ and the remaining atoms have the potential of $\bar{\psi}+\Delta \psi/2 $. Therefore, one rewrites the potential of electrode atoms in the form of constant potential difference $\Delta \psi$ so that
\begin{equation}
\begin{aligned}
&\boldsymbol{\Psi}=\bar{\psi} \boldsymbol{e}+\Delta \psi \boldsymbol{d}\\
&\boldsymbol{e}=(1,1,\cdots)\\
&\begin{matrix}\boldsymbol{d}=
 (\underbrace{-\frac{1}{2},\cdots,-\frac{1}{2},} \\{\ \qquad \text{Negative}}  \end{matrix}\begin{matrix}
 \underbrace{\frac{1}{2},\cdots,\frac{1}{2}}),\\ {\text{Positive}}  \end{matrix}
\end{aligned}
\end{equation}
which determines the distribution adjustment of $\boldsymbol{Q}$ at each step with
\begin{equation}
\boldsymbol{Q}=\boldsymbol{A}^{-1}(\bar{\psi} \boldsymbol{e}+\Delta \psi \boldsymbol{d}+\boldsymbol{b}).
\end{equation}
It can also be assumed that the charge distribution on the electrode satisfies the condition of electric neutrality $\boldsymbol{e}^T \boldsymbol{Q}=0$ \cite{scalfi2020charge}, which immediately derives

\begin{align}
\bar{\psi}&=-\frac{\boldsymbol{e}^T \boldsymbol{A}^{-1}(\Delta \psi \boldsymbol{d}+\boldsymbol{b})}{\boldsymbol{e}^T \boldsymbol{A}^{-1} \boldsymbol{e}}\label{psi_eq}\\ 
\boldsymbol{Q} & =\boldsymbol{J A}^{-1}(\Delta \psi \boldsymbol{d}+\boldsymbol{b}),
\label{Q_eq}
\end{align}
where 
\begin{equation}
\boldsymbol{J} \equiv \boldsymbol{I}-\left(\boldsymbol{A}^{-1} \boldsymbol{ee}^T\right) /\left(\boldsymbol{e}^T \boldsymbol{A}^{-1} \boldsymbol{e}\right) .    
\end{equation}
It is remarkable that satisfying the electric neutrality condition of the electrode plays a crucial role in simulations. The absence of this condition may lead to counterfactual phenomena. For example,  Ahrens-Iwers and Mei{\ss}ner \cite{ahrens2021constant} reported that in the empty gold capacitor, the surface charge on the electrode at the equilibrium  changes with introduced vacuum layer changing under the charge non-neutrality condition, which is inconsistent with the fact. For an intuitive illustration of the CP simulation model, a schematic is provided in Figure~\ref{cpm_model}.

\begin{figure}[h!]
		\centering
\includegraphics[width=1.0\linewidth]{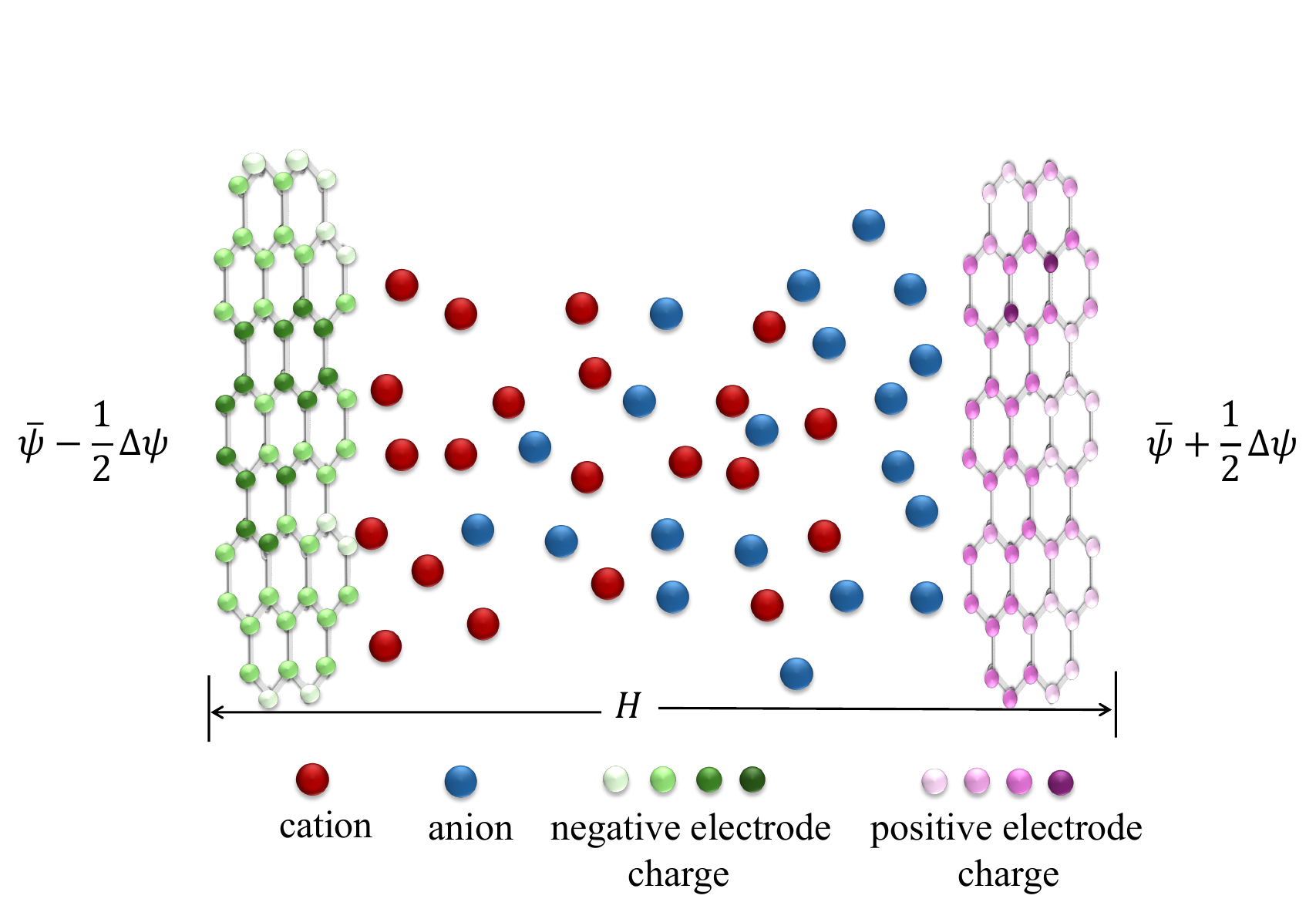}
		
        \caption{A schematic illustration of the electrode/ionic liquid system and its approximation system for the CP simulation. $\bar{\psi}-\frac{1}{2}\Delta\psi$ and $\bar{\psi}+\frac{1}{2}\Delta\psi$ are the potentials of the negative electrode and the positive electrode, respectively.  The different colors of the electrode charges indicate the dynamic changes of the electrode charges during the simulation process.}
		\label{cpm_model}
\end{figure}

In the CP simulation of the electrode/ionic liquid system, the calculation of the potential energy of Eq.~\eqref{tot_col} and the electrode charge adjustment of Eq.~\eqref{Q_eq} at each timestep requires the computation of pairwise interactions of long-range kernel. A classical acceleration technique is the use of the Ewald decomposition \cite{Ewald1921AnnPhys}, which splits the Coulomb interactions into the long-range part and short-range part as
\begin{equation}
    \label{eq::Ewald}
    \frac{1}{r}=\frac{\erf(\alpha r)}{r}+\frac{\erfc(\alpha r)}{r}.
\end{equation}
Here, $\erf(r)=\frac{2}{\sqrt{\pi}}\int_{0}^{r}\exp(-t^2)\mathrm{d}t$ denotes the error function and $\erfc(r)=1-\erf(r)$ be its complementary. The first part of Eq.~\eqref{eq::Ewald} is long-range and smooth, and is therefore treated in Fourier space, while the second part, due to its rapid decay, is directly truncated in the computation. Here $\alpha>0$ is a parameter that balances the convergence rates of both two components. Applying the Ewald decomposition to a quasi-2D system yields the well-known Ewald2D summation method \cite{parry1975electrostatic,zhonghanhu2014JCTC}, which decomposes Eq.~\eqref{tot_col} into $U_c=U_\mathcal{N}+U_\mathcal{F}^{\text{Q2D}}-U_{\text{corr}}$. Here the near part is calculated in the real space
\begin{equation}
    \label{eq::short_potential}
\begin{aligned}
U_\mathcal{N}&=\frac{1}{2} \sum_{i,j} \sum_{\boldsymbol{n}\in \mathbb{Z}^2\times \{0\}}{}^{\prime} q_{i} q_{j} \frac{\operatorname{erfc}\left(\alpha\left|\boldsymbol{r}_{ij}^{\bm{n}}\right|\right)}{\left|\boldsymbol{r}_{ij}^{\bm{n}}\right|} \\
&+\sum_{i,j} \sum_{\boldsymbol{n}\in \mathbb{Z}^2\times \{0\}} q_{i} Q_{j} \frac{\operatorname{erfc}\left(\alpha\left|\boldsymbol{r}_{i}-\boldsymbol{R}_{j}^{\bm{n}}\right|\right)-\operatorname{erfc}\left(\frac{1}{\eta}\left|\boldsymbol{r}_{i}-\boldsymbol{R}_{j}^{\bm{n}}\right|\right)}{\left|\boldsymbol{r}_{i}-\boldsymbol{R}_{j}^{\bm{n}}\right|}\\
&+\frac{1}{2} \sum_{i,j} \sum_{\boldsymbol{n}\in \mathbb{Z}^2\times \{0\}}{}^{\prime} Q_{i} Q_{j} \frac{\operatorname{erfc}\left(\alpha\left|\boldsymbol{R}_{ij}^{\bm{n}}\right|\right)-\operatorname{erfc}\left(\frac{1}{\sqrt{2}\eta}\left|\boldsymbol{R}_{ij}^{\bm{n}}\right|\right)}{\left|\boldsymbol{R}_{ij}^{\bm{n}}\right|},\\
\end{aligned}
\end{equation}
where $\bm{r}_{ij}^n=\bm{r}_i^n-\bm{r}_j^n$, $\boldsymbol{r}^{\bm{n}}:=\bm{r}+\bm{n}\circ\bm{l}$, and the prime represents the exclusion of the term of $i=j$ and $\bm{n}=\bm{0}$. The far part is evaluated by the Fourier transform,
\begin{equation}
    \label{eq::long_potential_q2D}
    U_\mathcal{F}^{\text{Q2D}}=\frac{1}{4\pi L_xL_y} \sum_{\bm{\dot{k}}\in 2\pi\mathbb{Z}^2\circ\bm{\dot{\ell}}^{-1}}\int_{k_z\in\mathbb{R}} \frac{4 \pi}{\dot{k}^{2}+k_z^2}\left|S_{1}([\boldsymbol{\dot{k}},k_z])+S_{2}([\boldsymbol{\dot{k}},k_z])\right|^{2} \exp \left(-\frac{\dot{k}^{2}+k_z^2}{4 \alpha^{2}}\right)\mathrm{d}k_z,
\end{equation}
where $\bm{\dot{\ell}}^{-1}=(L_x^{-1},L_y^{-1})$, $\dot{k}=|\bm{\dot{k}}|$ denotes its Euclidean $2$-norm, and the structure factors of ion and Gaussians are defined respectively by
\begin{equation}   
S_1(\boldsymbol{k})=\sum_{j=1}^{N_{\text{ion}}}q_j e^{i\boldsymbol{k}\cdot \boldsymbol{r}_j},\quad  and \ \ 
S_2(\boldsymbol{k})=\sum_{j=1}^{N_{\text{ele}}}Q_j e^{i\boldsymbol{k}\cdot \boldsymbol{R}_j}.
\end{equation}
To efficiently calculate the Fourier part Eq.~\eqref{eq::long_potential_q2D} by fast algorithms like the fast Fourier Transform (FFT), one employs the approximation 
\begin{equation}
    \label{eq::long_potential}
    U_\mathcal{F}=\frac{1}{2 L_xL_yL_z} \sum_{\bm{k}\in \mathcal{L}_k} \frac{4 \pi}{k^{2}}\left|S_{1}(\boldsymbol{k})+S_{2}(\boldsymbol{k})\right|^{2} \exp \left(-\frac{k^{2}}{4 \alpha^{2}}\right),
\end{equation}
where the Fourier integral in the non-periodic direction is discretized with the step size of $\Delta k_z=2\pi/L_z$, correspondingly the reciprocal Fourier mode is $ \mathcal{L}_k:=2\pi\mathbb{Z}^3\circ \bm{L}^{-1}\backslash\{\bm{0}\}$, $k=|\bm{k}|$. The corresponding correction term $U_{\text{self}}$ of this approach becomes \cite{yeh1999ewald}
\begin{equation}
    \label{eq::Self}
    U_{\text{corr}}=-\left(\frac{1}{\sqrt{2 \pi}\eta}-\frac{\alpha}{\sqrt{\pi}}\right) \sum_{i=1}^{N_{\text{ele}}} Q_{i}^{2}+\frac{\alpha}{\sqrt{\pi}} \sum_{i=1}^{N_{\text{ion}}} q_{i}^{2}-\frac{2\pi}{L_xL_yL_z} \left(\sum_{i} q_{i} z_i+\sum_{j} Q_{j} Z_j\right)^2,
\end{equation}
which is composed of the $i=j$ self-interaction and the $\bm{k}=\bm{0}$ Fourier mode.

An alternative physical interpretation of Eq.~\eqref{eq::long_potential} is the slab correction method \cite{yeh1999ewald,gingrich2010ewald,gingrich2010simulating}. This method artificially imposes periodicity in the non-periodic direction by introducing a vacuum layer. The core idea is that by making the vacuum layer sufficiently large, the electrostatic interactions between adjacent periodic images become negligible to a desired precision, thus enabling an accurate simulation of a non-periodic system within a periodic framework. Specifically, this corresponds to increasing the simulation box size in the $z$-direction from $H$ to $L_z=\lambda H=\lambda \mu\max\{L_x,L_y\}$, where $\lambda\ge 1$ is the zero-padding ratio of the vacuum layer. We then denote $V=L_xL_yL_z$ as the volume of the extended full-periodic simulation box. For a fixed truncation of the Fourier integral, reducing the discretization step size $\Delta k_z$ (i.e., increasing $\lambda$ and $L_z$) to improve accuracy inevitably increases the number of Fourier grid points that must be processed by the FFT, thus raising the overall computational cost. A detailed analysis of the optimal zero-padding ratio for quasi-2D systems can be found in \cite{gan2024fast, gan2025random, gao2024fast, liu2024optimal, gao2025accurate}, although these previous discussions on anisotropy have primarily focused on strongly confined systems (i.e., $H\ll L_x,L_y$). In the following, we discuss a 3D periodic simulation box after zero-padding, that is, we calculate Eq.~\eqref{eq::long_potential} instead of Eq.~\eqref{eq::long_potential_q2D} in the total electrostatic energy $U_c$.

CP simulations require the charge distribution $\bm{Q}$ on the electrode to be updated continuously according to Eq.~\eqref{Q_eq}. Using the Ewald decomposition, the quadratic matrix $\bm{A} = (A_{ij})$ and the vector $\bm{b} = (b_1,\cdots,b_{N_\text{ele}})^T$ can similarly be written as
\begin{equation}
    \label{eq::A}
    \begin{aligned}
        A_{i j}=&\frac{1}{V} \sum_{\bm{k}\in \mathcal{L}_k} \frac{4 \pi}{k^{2}} e^{i \boldsymbol{k} \cdot \boldsymbol{R}_{i j}}e^{-k^{2} / (4 \alpha^{2})}+\sum_{\boldsymbol{n}\in \mathbb{Z}^2\times \{0\}}{}^{\prime}\frac{\operatorname{erfc}\left(\alpha\left|\boldsymbol{R}_{i j}^{\bm{n}}\right|\right)-\operatorname{erfc}\left(\frac{1}{\sqrt{2}\eta}\left|\boldsymbol{R}_{i j}^{\bm{n}}\right|\right)}{\left|\boldsymbol{R}_{i j}^{\bm{n}}\right|}\\
&+2 \delta_{i j}\left(\frac{1}{\sqrt{2 \pi}\eta}-\frac{\alpha}{\sqrt{\pi}}\right)+\frac{2\pi}{V}Z_i Z_j, \quad\quad i,j=1,2,\cdots,N_{\text{ele}},\\
    \end{aligned}
\end{equation}
and
\begin{equation}
\begin{aligned}
        \label{eq::b}
    b_{j} :=& b_{\mathcal{F},j}-b_{\text{corr},j}+b_{\mathcal{N},j}+\Delta \psi\\
    =&-\frac{1}{2 V} \sum_{\bm{k}\in \mathcal{L}_k} \frac{4 \pi}{k^{2}}\left[e^{i \boldsymbol{k} \cdot \boldsymbol{R}_{j}} S_{1}(\boldsymbol{k})^{*}+e^{-i \boldsymbol{k} \cdot \boldsymbol{R}_{j}} S_{1}(\boldsymbol{k})\right] e^{-k^{2} / (4 \alpha^{2})}+\frac{4\pi Z_j}{V}\sum_{i=1}^{N_{\text{ion}}}q_iz_{i}\\
& -\sum_{i=1}^{N_\text{ion}} \sum_{\boldsymbol{n}\in \mathbb{Z}^2\times \{0\}}q_{i} \frac{\operatorname{erfc}\left(\alpha\left|\boldsymbol{R}_{j}-\boldsymbol{r}_{i}^{\bm{n}}\right|\right)-\operatorname{erfc}\left(\frac{1}{\eta}\left|\boldsymbol{R}_{j}-\boldsymbol{r}_{i}^{\bm{n}}\right|\right)}{\left|\boldsymbol{R}_{j}-\boldsymbol{r}_{i}^{\bm{n}}\right|} +\Delta{\psi}.\\
\end{aligned}
\end{equation}
where $S_1(\boldsymbol{k})^*$ being the conjugate of $S_1(\boldsymbol{k})$.

Since the electrode atoms are fixed, matrix $\bm{A}$ is constant and needs to be computed only once at initialization. The computational bottleneck is therefore the calculation of vector $\bm{b}$, whose direct evaluation involves an $O(N^2)$ cost due to ion-Gaussian interactions, where $N=N_{\text{ion}}+N_{\text{ele}}$ denotes the total number of particles. To overcome this, the PPPM algorithm is typically employed, reducing the complexity of computing $\bm{b}$ to $O(N\log N)$ \cite{tee2022fully,ahrens2021constant}. In the following, we develop the symmetric-preserving random batch Ewald method to the computation of electrolyte ion dynamics and electrode charge fluctuations by avoiding the use of communication-intensive FFT and achieving $O(N)$ complexity.

\section{Symmetry-preserving random batch Ewald method}
\label{sec::SRBE}
In this section, we propose a novel, efficient algorithm for CP simulations of electrode/ionic liquid systems, achieved by extending the symmetry-preserving random batch Ewald (SRBE) method \cite{gao2023screening}. Our approach deterministically computes the important low-frequency Fourier modes normal to the electrodes, while the remaining contributions are handled stochastically via a random mini-batch importance sampling strategy. This hybrid strategy efficiently accelerates the update of both electrode charges and electrolyte ion positions. 

During the evolution of the electrode/ionic liquid system, the update of the charge distribution on the electrode in Eq.~\eqref{Q_eq}  is computationally expensive and requires efficient acceleration. The main computational cost lies in the evaluation of vector $\bm{b}$ at each time step, whose elements take the form given in Eq.~\eqref{eq::b}. Indeed, the far part of $b_j$ reads
\begin{equation}
b_{\mathcal{F},j} =-\frac{1}{2 V} \sum_{\bm{k}\in \mathcal{L}_k} \frac{4 \pi}{k^{2}}\left[e^{i \boldsymbol{k} \cdot \boldsymbol{R}_{j}} S_{1}(\boldsymbol{k})^{*}+e^{-i \boldsymbol{k} \cdot \boldsymbol{R}_{j}} S_{1}(\boldsymbol{k})\right] e^{-k^{2} / (4 \alpha^{2})},\quad j=1,2,\cdots,N_{\text{ele}}.
\label{eq::b_j_far}
\end{equation}
%\subsection{Electrolyte ion dynamics}\label{subsection::ele_ion_srbe}
Additionally, ions carry fixed charges and can move freely within the electrolyte. They interact with each other and are influenced by the charge distribution on the electrode. After applying the Ewald decomposition as mentioned in Section~\ref{sec::overview}, the far component of their interactions is given by the negative gradient of Eq.~\eqref{eq::long_potential},
\begin{equation}
\label{eq::force}
\begin{aligned}
  \boldsymbol{F}_{\mathcal{F},i}&= -\nabla_{\boldsymbol{r}_i}U_{\mathcal{F}}(\{\boldsymbol{r}_i\},\boldsymbol{Q})\\
  &=-\sum_{\bm{k}\in \mathcal{L}_k}\frac{4\pi q_i\boldsymbol{k}}{Vk^2}e^{-k^2/(4\alpha^2)}\Big(\text{Im}(S_1(\boldsymbol{k})e^{-i\boldsymbol{k}\cdot\boldsymbol{r}_i})+\text{Im}\left(S_2(\boldsymbol{k})e^{-i\boldsymbol{k}\cdot\boldsymbol{r}_i}\right)\Big).
\end{aligned}
\end{equation}

The electrode charge fluctuation and electrolyte ion dynamics are usually treated by the PPPM method \cite{ahrens2022electrode,ahrens2021constant}.
However, in the case of $L_z>\max\{L_x,L_y\}$, the Fourier modes in reciprocal space exhibit a finer resolution along the $z$-direction. These modes, being closer to the origin, capture the dominant long-range electrostatic contributions.  A classical acceleration method involves employing the random batch Ewald (RBE) algorithm \cite{jin2021random,liang2022superscalability,liang2022random} to accelerate computations in the Fourier space. However, directly applying the RBE algorithm by importance sampling in the Fourier space encounters a critical limitation: it fails to adequately sample Fourier space frequencies near the origin along the $z$-direction. This results in a large variance in the charges on the electrodes, potentially destabilizing or even causing the failure of the simulated dynamical process, which will be shown in the simulation section. Fortunately, this issue can be addressed by modifying the RBE algorithm introducing the SPMF condition proposed in previous work \cite{gao2023screening}. Taking $ \boldsymbol{F}_{\mathcal{F},i}$ as the example, $ \boldsymbol{F}_{\mathcal{F},i}$ can be divided into two components: the ion-ion interaction $ \boldsymbol{F}_{\mathcal{F},i}^{\text{ion}}$ and the ion-electrode interaction $ \boldsymbol{F}_{\mathcal{F},i}^{\text{ele}}$. If $ \boldsymbol{F}_{\mathcal{F},i}^{\text{ion}}$ is expressed in terms of force on ion-pairs, it can be written as
\begin{equation}
\begin{aligned}   \boldsymbol{F}_{\mathcal{F},i}^{\text{ion}}&=\sum_{j=1}^{N_{\text{ion}}}q_i\boldsymbol{f}_{ij}^{\text{ion}},\\
\boldsymbol{f}_{ij}^{\text{ion}}&=-\sum_{\bm{k}\in \mathcal{L}_k}\frac{4\pi\boldsymbol{k}}{Vk^2}e^{-k^2/(4\alpha^2)}\text{Im}(e^{-i\boldsymbol{k}\cdot\boldsymbol{r}_{ij}}).
\end{aligned} 
\end{equation} 
Then the SPMF condition $\langle\cdot \rangle_{\text{sp}}$ for the force $\boldsymbol{f}_{ij}^{\text{ion}}$ on the $z$ direction by performing the integration of $\boldsymbol{f}_{ij}^{\text{ion}}$ in the $x$ and $y$ directions \cite{gao2023screening} is
\begin{equation}
\langle\boldsymbol{f}_{ij}^{\text{ion}}\rangle_{\text{sp}}=-\frac{4\pi}{V}\sum_{k_z\neq 0}\frac{e^{-k_z^2/(4\alpha^2)}}{k_z}\text{Im}(e^{-i{k}_z {z}_{ij}}).
\label{eq::spmf_force}
\end{equation}  
Note that in the Fourier space, the long-range interaction force between particles can accurately capture the electrostatic correlation effects in bulk regions only if it satisfies Eq.~\eqref{eq::spmf_force}, which represents the SPMF condition. Electrostatic correlation effects are highly sensitive to the distribution and fluctuations of the charges of free particles. Therefore, a similar approach can be employed to address the challenge issue of large fluctuations in electrode charges under CP simulations in this study. From the expression of Eq.~\eqref{eq::spmf_force}, it can be understood that accurately calculating all Fourier frequency contributions in the $z$-direction of $\boldsymbol{F}_{\mathcal{F},i}^{\text{ion}}$ for every ions ensures precise satisfaction of the SPMF condition. Furthermore, due to the rapid convergence of the exponential term, it suffices to accurately compute the Fourier frequencies in the $z$-direction that are closest to the origin. Specifically, the low-frequency Fourier modes along the $z$-direction are computed deterministically. The number of these modes is determined by the aspect ratio $\lambda\mu = L_z / \max\{L_x,L_y\}$ with $O(1)$ scaling. The remaining high-frequency components are then efficiently evaluated stochastically using the  RBE method. The modified RBE method, which is to satisfy the SPMF condition is called as the SRBE method. 

By comparing Eqs.~\eqref{eq::b_j_far} and \eqref{eq::force}, it can be observed that the expression for $b_{\mathcal{F},j}$ shares similarities with the expression for the long-range force in terms of the frequency $k$. More importantly, the value of $b_{\mathcal{F},j}$ at each simulation step is directly involved in the calculation of electrode charges, which is highly sensitive for ensuring the proper progression of CP simulations. This highlights the critical importance of accurately computing $b_{\mathcal{F},j}$. Therefore, to accelerate CP simulations while maintaining their accuracy, we adopt the concept from the SRBE algorithm, which precisely calculates the low-frequency contribution in the $z$-direction and stochastically approximates the high-frequency components, to update $b_{\mathcal{F},j}$ in the electrode charge calculations. As a result, the far component of  $\boldsymbol{b}_j$  can be rewritten as 
\begin{equation}
b_{\mathcal{F},j}^{*} = -\sum_{\bm{k}\in \mathcal{I}} \frac{4 \pi}{V }\frac{\text{Re}\left(e^{-i \boldsymbol{k} \cdot \boldsymbol{R}_{j}} S_{1}(\boldsymbol{k})\right)}{k^2} e^{-k^2/(4\alpha^2)} -\frac{G}{P} \sum_{\ell=1}^P \frac{4\pi}{V}\frac{\text{Re}\left(e^{-i \boldsymbol{k}_{\ell} \cdot \boldsymbol{R}_{j}} S_{1}(\boldsymbol{k}_{\ell})\right)}{k_{\ell}^{2}},
\label{eq::vec_b_srbe}
\end{equation}
and the SRBE force for electrolyte ion $i$ reads
\begin{equation}
    \label{eq::F_SRBE}
    \begin{aligned}
\boldsymbol{F}_{\mathcal{F},i}^{*}=&-\sum_{\bm{k}\in \mathcal{I}}\frac{4\pi q_i\boldsymbol{k}}{Vk^2}e^{-k^2/(4\alpha^2)}\Big[\text{Im}(S_1(\boldsymbol{k})e^{-i\boldsymbol{k}\cdot\boldsymbol{r}_i})+\text{Im}\left(S_2(\boldsymbol{k})e^{-i\boldsymbol{k}\cdot\boldsymbol{r}_i}\right)\Big]\\
&-\sum_{\ell=1}^{P}\frac{G}{P}\frac{4\pi q_i \bm{k}_\ell}{Vk_\ell^2}\Big[\text{Im}(S_1(\boldsymbol{k}_\ell)e^{-i\boldsymbol{k}_\ell\cdot\boldsymbol{r}_i})+\text{Im}\left(S_2(\boldsymbol{k}_\ell)e^{-i\boldsymbol{k}_\ell\cdot\boldsymbol{r}_i}\right)\Big],
\end{aligned}
\end{equation}
where $\mathcal{I}=\{(0,0,2\pi m_z/L_z),\ 1\le|m_z|\le M,\ m_z\in\mathbb{Z}\}$ with $M$ denoting a truncation parameter depending on the aspect ratio $\lambda\mu$. The Fourier modes in the latter summation are randomly sampled in $\mathcal{L}_k \backslash \mathcal{I}$ from the  probability distribution $\mathcal{P}(\bm{k})$,
\begin{equation}
    \label{eq::Prob}
    \{\boldsymbol{k}_{\ell}\}_{\ell=1}^P\sim \mathcal{P}(\bm{k}):=e^{-k^2/(4\alpha^2)}/G,\quad \boldsymbol{k}_{\ell}\in \mathcal{L}_k\backslash\mathcal{I},
\end{equation}
where the normalization constant is given by
\begin{equation}
    \label{eq::NCons}
G=\sum_{\bm{k}\in\mathcal{L}_k\backslash\mathcal{I}}e^{-k^2/(4\alpha^2)}.
\end{equation}

To sample from $\mathcal{P}(\bm{k})$, we employ the Metropolis algorithm. The Metropolis procedure samples from the three-dimensional Gaussian distribution as follows. First, due to the separability of the Gaussian distribution, proposal values $\bm{m}=(m_x,m_y,m_z)$ are independently sampled along each dimension as $ u_d^* \sim \mathcal{N}\left(0, \alpha^2 L_d^2 / (2\pi)^2 \right)$, $d\in \{x,y,z\}$. Then, the proposal is rounded to the nearest integer as $ m^*_d = \mathrm{round}(u_d^*)$. Subsequently, the acceptance probability $q_d(m_d^*|m_d)
$ is defined as
\begin{equation}
    \label{eq::prob_proposal}
    q_d(m_d^*|m_d)=q_d(m_d^*):=\int_{m_d^*-1/2}^{m_d^*+1/2}\sqrt{\frac{\pi}{\alpha^2L_d^2}}e^{-\pi^2x^2/(\alpha L_d)^2}dx, 
\end{equation}
and the proposal $\bm{k}^*=2\pi \bm{m}^*\circ \bm{L}^{-1}$, if belonging to the high-frequency components, is then accepted with the following acceptance probability
\begin{equation}
    \label{eq::accept}
    a(\bm{m}^*|\bm{m})=\min\left\{1,\frac{\mathcal{P}(\bm{k^*})q_x(m_x)q_y(m_y)q_z(m_z)}{\mathcal{P}(\bm{k})q_x(m_x^{*})q_y(m_y^{*})q_z(m_z^{*})}\right\}.
\end{equation}
Since both the proposal distribution and the transition probabilities are derived from continuous Gaussian distributions or their truncated counterparts over small intervals, the Metropolis procedure achieves a very high acceptance rate.

In summary, each step in the simulation updates the positions and velocities of all particles along with the charges on the electrodes. The complete flowchart for this process, accelerated by the proposed SRBE algorithm, is detailed in Algorithm \ref{algo_srbe}.

\begin{algorithm}[H]
\caption{Symmetry-preserving random batch Ewald algorithm}\label{algo_srbe}
\begin{algorithmic}[1] 
\Require Select cutoff $r_c$, time step $\Delta t$, total simulation steps $N_{\text{step}}$, batch size $P$, size of low-frequency Fourier mode $M$, preset constant potential difference $\Delta\psi$. 
\State Initialize positions, velocities, and charges of all electrolyte ions and electrode charges. 
\For{$n=1,2,\cdots, N_{\text{step}}$}
\State  Sample $\boldsymbol{k}\sim G^{-1}e^{-|\boldsymbol{k}|^2/(4\alpha^2)}$ with $\boldsymbol{k}\neq \boldsymbol{0}$ for  $P$ frequencies $\{\boldsymbol{k}_{\ell}\}\in \mathcal{L}_k\backslash\mathcal{I}$.
\State Compute the vector $\boldsymbol{b}$ by Eq.~\eqref{eq::vec_b_srbe}. Update the electrode charge $\boldsymbol{Q}$ by Eq.~\eqref{Q_eq}.
\State For each ion, compute the short-range force by direct truncation, and compute the long-range force by Eq.~\eqref{eq::F_SRBE}. 
\State Integrate Newton's equation with proper numerical scheme and thermostat.
\EndFor
\State \textbf{Return:} Configurations $\{(\{\bm{r}_i^n\},\{\bm{v}_i^n\},\bm{Q}^n)\}_{n=1}^{N_{\text{step}}}$ for each discrete time $t_n=n\Delta t$.
\end{algorithmic}
\end{algorithm}

To demonstrate the applicability and efficiency of the SRBE method, we provide several theoretical results, including an analysis of variance reduction, convergence analysis, and an overall complexity analysis of the algorithm.

First, we analyze $b_{\mathcal{F},j}^{*}$ for each electrode atom $j$ at each time step for the electrode charge fluctuation. Let 
\begin{equation}
{\xi}_j:={b}_{\mathcal{F},j}^{*}-{b}_{\mathcal{F},j}, 
\end{equation}
and under the Debye H\"{u}ckel (DH) theory (see \cite{hansen2013theory} and also Appendix \ref{app::DH}), Theorem \ref{thm::var_bj_srbe} provides the variance bound of $\xi_j$.

\begin{mythe}
\label{thm::var_bj_srbe}
For the SRBE method in CP simulation of the electrode/ionic liquid system, the fluctuation ${\xi}_j$ has zero expectation. Furthermore, under the DH theory and charge neutrality condition, the variance of $\xi_j$ of the SRBE method scales as $O([\exp({-\omega^2/(4\alpha^2))}+\erfc(\omega/(2\alpha))]/P)$, where $\omega$ is a parameter depending on the box scale and truncation parameter $M$.
\end{mythe}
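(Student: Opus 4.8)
My plan is to read $\xi_j$ as the sampling error of an unbiased importance-sampling estimator and then bound its second moment using the DH closure. Write the stochastic part of $b_{\mathcal{F},j}^{*}$ as $-\frac{1}{P}\sum_{\ell=1}^{P}X(\bm{k}_\ell)$ with the single-draw variable
\begin{equation}
X(\bm{k})=\frac{4\pi G}{Vk^{2}}\,\text{Re}\!\left(e^{-i\bm{k}\cdot\bm{R}_{j}}S_{1}(\bm{k})\right),\qquad \bm{k}\sim\mathcal{P}.
\end{equation}
Because $\mathcal{P}(\bm{k})=e^{-k^{2}/(4\alpha^{2})}/G$, the factor $G/e^{-k^{2}/(4\alpha^{2})}$ in $X$ cancels the sampling weight, so $\mathbb{E}[X]=\frac{4\pi}{V}\sum_{\bm{k}\in\mathcal{L}_k\setminus\mathcal{I}}k^{-2}\text{Re}(e^{-i\bm{k}\cdot\bm{R}_j}S_1(\bm{k}))e^{-k^{2}/(4\alpha^{2})}$, which is precisely the $\mathcal{L}_k\setminus\mathcal{I}$ portion of $b_{\mathcal{F},j}$, while the $\mathcal{I}$ portion is computed exactly in $b_{\mathcal{F},j}^{*}$. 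Hence $\xi_j=-\bigl(\frac{1}{P}\sum_\ell X(\bm{k}_\ell)-\mathbb{E}[X]\bigr)$, giving $\mathbb{E}[\xi_j]=0$ at once, and by independence of the $P$ draws $\mathrm{Var}(\xi_j)=\frac{1}{P}\mathrm{Var}(X)\le\frac{1}{P}\mathbb{E}[X^{2}]$. The whole estimate thus reduces to controlling the single weighted sum $\mathbb{E}[X^{2}]=\frac{(4\pi)^{2}G}{V^{2}}\sum_{\bm{k}\in\mathcal{L}_k\setminus\mathcal{I}}k^{-4}[\text{Re}(e^{-i\bm{k}\cdot\bm{R}_j}S_1(\bm{k}))]^{2}e^{-k^{2}/(4\alpha^{2})}$.

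Second, I would pass to the configurational average. Writing $W=e^{-i\bm{k}\cdot\bm{R}_j}S_1(\bm{k})$ and $[\text{Re}(W)]^{2}=\frac{1}{4}(W^{2}+2|W|^{2}+\overline{W}^{2})$, the phase-carrying terms are governed by $\langle S_1(\bm{k})^{2}\rangle$, which vanishes for every mode with $(k_x,k_y)\neq(0,0)$ by periodicity in the $x,y$ plane; the only exceptions are pure-$z$ survivors, which are already Gaussian-damped, so $\langle[\text{Re}(W)]^{2}\rangle=\frac{1}{2}\langle|S_1(\bm{k})|^{2}\rangle$ up to negligible terms. I would then invoke the DH closure (Appendix~\ref{app::DH}) for the charge structure factor, $\langle|S_1(\bm{k})|^{2}\rangle\simeq\Gamma\,k^{2}/(k^{2}+\kappa^{2})$, with $\Gamma=\sum_i q_i^{2}$ and $\kappa$ the inverse Debye length. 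The decisive feature is the screening numerator $k^{2}$ (the perfect-screening/Stillinger--Lovett behaviour): it cancels one power of $k^{2}$, turning the dangerous $k^{-4}$ weight into $k^{-2}(k^{2}+\kappa^{2})^{-1}$.

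Third, I would use the deterministic removal of $\mathcal{I}$ to create a spectral gap. Any mode with $(k_x,k_y)\neq(0,0)$ obeys $k\ge 2\pi/\max\{L_x,L_y\}$, while the remaining pure-$z$ modes obey $k\ge 2\pi(M+1)/L_z$; choosing $M\gtrsim\lambda\mu$ so that the two thresholds are comparable, every surviving mode satisfies $k\ge\omega:=\min\{2\pi/\max\{L_x,L_y\},\,2\pi(M+1)/L_z\}$. Replacing the sum by a radial integral over $\{k\ge\omega\}$ (dense-mode limit; the powers of $V$ together with $G\sim V\alpha^{3}$ collapse to a constant $C$ independent of $P$ and $\omega$) and using $d^{3}k=4\pi k^{2}\,dk$ to cancel the $k^{-2}$, I obtain
\begin{equation}
\mathbb{E}[X^{2}]\le C\,\Gamma\int_{\omega}^{\infty}\frac{e^{-k^{2}/(4\alpha^{2})}}{k^{2}+\kappa^{2}}\,dk\le C\,\Gamma\int_{\omega}^{\infty}\frac{e^{-k^{2}/(4\alpha^{2})}}{k^{2}}\,dk.
\end{equation}
Integration by parts evaluates the last integral exactly as $\omega^{-1}e^{-\omega^{2}/(4\alpha^{2})}-\frac{\sqrt{\pi}}{2\alpha}\erfc(\omega/(2\alpha))$, which is bounded by the sum of its two Gaussian-tail terms; dividing by $P$ delivers $\mathrm{Var}(\xi_j)=O\bigl([e^{-\omega^{2}/(4\alpha^{2})}+\erfc(\omega/(2\alpha))]/P\bigr)$.

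The step I expect to be hardest is the configurational average: rigorously justifying the DH estimate $\langle|S_1(\bm{k})|^{2}\rangle\simeq\Gamma k^{2}/(k^{2}+\kappa^{2})$ and, in particular, that the screening numerator $k^{2}$ really appears, since this single factor is what renders the small-$k$ weight integrable and ultimately produces the benign Gaussian-tail scaling rather than an $\omega^{-2}$ blow-up. A secondary obstacle is the anisotropic geometry: one must tie the truncation $M$ to the aspect ratio $\lambda\mu$ so that $\omega$ controls all directions simultaneously, and justify replacing the discrete sum by the radial integral in the regime where the $z$-spacing $2\pi/L_z$ is far finer than the transverse spacings, which is exactly the regime that breaks the naive RBE.
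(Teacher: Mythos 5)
Your proposal is correct in substance and follows the same skeleton as the paper's proof: interpret $\xi_j$ as the error of an unbiased importance-sampling estimator (so $\mathbb{E}\xi_j=0$ and $\mathrm{Var}(\xi_j)=P^{-1}\mathrm{Var}(X)\le P^{-1}\mathbb{E}[X^2]$), use the DH screening to neutralize the $k^{-4}$ weight, replace the lattice sum over $\mathcal{L}_k\setminus\mathcal{I}$ by a radial integral over $k\ge\omega$ with $\omega=\min\{2\pi/\max\{L_x,L_y\},\,2\pi(M+1)/L_z\}$, and evaluate the Gaussian tail. The one genuine divergence is where the DH input enters. The paper asserts the pointwise per-configuration bound $\mathrm{Re}(e^{-i\bm{k}\cdot\bm{R}_j}S_1(\bm{k}))/k^2\le C$ with $C=O(1)$ (justified by the conditional DH charge profile of Appendix A, whose small-$k$ cancellation makes $S_1=O(k^2)$), which leaves $\int_\omega^\infty k^2e^{-k^2/(4\alpha^2)}\,dk$ and hence an $N$-independent prefactor; you instead keep the $k^{-4}$ weight and close with the ensemble-averaged charge structure factor $\langle|S_1(\bm{k})|^2\rangle\simeq\Gamma k^2/(k^2+\kappa^2)$, leaving $\int_\omega^\infty k^{-2}e^{-k^2/(4\alpha^2)}\,dk$. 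Both integrals produce the same combination of $e^{-\omega^2/(4\alpha^2)}$ and $\erfc(\omega/(2\alpha))$, so your route proves the stated scaling in $P$ and $\omega$. What it buys is a more standard and defensible use of DH theory (the Stillinger--Lovett $k^2$ numerator is exactly the second-moment statement you need, whereas the paper applies a mean-field profile as a pointwise bound on a fluctuating quantity); what it costs is that the bound becomes a configurational average and carries the prefactor $\Gamma=\sum_iq_i^2=O(N_{\text{ion}})$, so unlike the paper's version it does not by itself give a system-size-independent constant. Since Theorem \ref{thm::var_bj_srbe} only claims the dependence on $P$ and $\omega$, this does not affect the conclusion, but if you want the $N$-independence that the companion force estimate (Theorem \ref{thm::1/P}) advertises, you would need to normalize the structure factor per particle or adopt the paper's pointwise bound.
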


\begin{proof}
    By the electric neutrality of the ions of the electrolytes, it holds that
\begin{equation}
    \frac{\text{Re}\left(e^{-i \boldsymbol{k} \cdot \boldsymbol{R}_{j}} S_{1}(\boldsymbol{k})\right)}{k^2}\leq C,
\end{equation}
where $C$ is a constant from the DH assumption. One has 
\begin{equation}
\begin{aligned}
\mathbb{E}\left|{\xi}_j\right|^2&=\frac{1}{P}\left(\sum_{\boldsymbol{k}\in \mathcal{L}_k\backslash\mathcal{I}}\frac{16\pi^2G}{V^2}e^{-k^2/(4\alpha^2)}C^2 -|{b}_{\mathcal{F},j}^{\text{HF}}|^2\right)\\
&\leq \frac{G}{P} \frac{\widetilde{C}_{I}16\pi^2}{V^2}\int_{\omega}^{\infty} \frac{V}{(2\pi)^3}{e^{-k^2/(4\alpha^2)}}4\pi k^2 d k\\
&\leq \frac{16\alpha^2\widetilde{C}_{I}G}{PV}\left[\omega e^{-\omega^2/(4\alpha^2)}+\alpha\sqrt{\pi} \cdot\text{erfc}\left(\frac{\omega}{2\alpha}\right)\right],
\end{aligned}
\label{eq::charge_Var}
\end{equation}
where
\begin{equation}
    \label{eq::charge_HF}
{b}_{\mathcal{F},j}^{\text{HF}}:=-\sum_{\boldsymbol{k}\in \mathcal{L}_k\backslash\mathcal{I}} \frac{4 \pi}{V }\frac{\text{Re}\left(e^{-i \boldsymbol{k} \cdot \boldsymbol{R}_{j}} S_{1}(\boldsymbol{k})\right)}{k^2} e^{-k^2/(4\alpha^2)},
\end{equation}
and \begin{equation}
    \label{eq::omega}
    \omega=\min\left\{\frac{2\pi}{\max\{L_x,L_y\}},\frac{2\pi (M+1)}{L_z}\right\}
\end{equation}
denotes the minimal Fourier mode among the whole sampling space. The second inequality in Eq.~\eqref{eq::charge_Var} arises from approximating the trapezoidal sum by a continuous integral, a technique widely used in the error analysis of Ewald summation \cite{kolafa1992cutoff}. The validity of this approximation has been adequately discussed in the Appendix B of \cite{LIANG2025101759}. When selecting $\alpha=\rho_{\text{ion}}^{1/3}$ be an $O(1)$ parameter, one finds that the cost in short-range interaction becomes linear and also $G=O(V)$ \cite{jin2021random}, which proves the result.
\end{proof}

\begin{remark}
    As can be seen from Eqs. \eqref{eq::charge_Var} and \eqref{eq::omega}, if the symmetry-preserving technique is not applied in the RBE sampling (corresponding to $M = 0 $), then where $L_z \gg L_x$ or $ L_y $, the $\erfc$ factor in the variance will be close to 1. In contrast, the application of the SRBE method significantly reduces the $\erfc$ factor, achieving variance reduction at a Gaussian decay rate. It is also important to note that once the truncation parameter $M$ exceeds the aspect ratio $\lambda\mu$, the Fourier modes in the periodic directions begin to dominate the decay of the $\erfc$ factor, and further increasing $M$ becomes unnecessary.
\end{remark}

Then, in a similar way, we can analyze the approximation of the random force in a single step for the ion dynamics in the electrolyte. Define the fluctuation for the long-range force on atom $i$ by
\begin{equation}
\boldsymbol{\chi}_i:=\boldsymbol{F}_{\mathcal{F},i}^{*}-\boldsymbol{F}_{\mathcal{F},i},    
\end{equation}
and Theorem \ref{thm::1/P} provides an analogous variance reduction result when the SRBE method is employed. Again, the introduction of the symmetric-preserving technique yields a reduction in the variance prefactor, with a Gaussian decay rate with respect to the truncation parameter $M$.

\begin{mythe}
\label{thm::1/P}
The force fluctuation of each ion $\boldsymbol{\chi}_i$ has zero expectation. Suppose that the ion density $ \rho_{\emph{ion}} = N_{\emph{ion}} / V $ and the electrode atom number density $ \rho_{\emph{ele}} = N_{\emph{ele}} / V $ are fixed. Under the DH assumption, the SRBE method yields a variance of the stochastic force that scales as $O(\erfc(\omega/(2\alpha))/P)$, with a prefactor independent of the system size $ N_{\emph{ion}} $ and 
$ N_{\emph{ele}}$. 
\end{mythe}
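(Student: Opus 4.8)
The plan is to mirror the proof of Theorem~\ref{thm::var_bj_srbe}, since the stochastic force estimator $\boldsymbol{F}_{\mathcal{F},i}^{*}$ of Eq.~\eqref{eq::F_SRBE} has exactly the importance-sampling structure of $b_{\mathcal{F},j}^{*}$. Abbreviating the per-mode summand as
\[
\boldsymbol{g}_i(\boldsymbol{k}) = -\frac{4\pi q_i\boldsymbol{k}}{Vk^2}\Big[\text{Im}\big(S_1(\boldsymbol{k})e^{-i\boldsymbol{k}\cdot\boldsymbol{r}_i}\big)+\text{Im}\big(S_2(\boldsymbol{k})e^{-i\boldsymbol{k}\cdot\boldsymbol{r}_i}\big)\Big],
\]
the estimator splits into the deterministic sum over $\mathcal{I}$ plus $\tfrac{G}{P}\sum_{\ell=1}^P\boldsymbol{g}_i(\boldsymbol{k}_\ell)$ with $\boldsymbol{k}_\ell\sim\mathcal{P}(\boldsymbol{k})$. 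First I would verify unbiasedness: since $\mathbb{E}[\boldsymbol{g}_i(\boldsymbol{k}_\ell)] = G^{-1}\sum_{\boldsymbol{k}\in\mathcal{L}_k\backslash\mathcal{I}}e^{-k^2/(4\alpha^2)}\boldsymbol{g}_i(\boldsymbol{k})$, the factor $G/P$ together with the $P$ independent draws reproduces the exact high-frequency tail $\boldsymbol{F}_{\mathcal{F},i}^{\text{HF}}:=\sum_{\boldsymbol{k}\in\mathcal{L}_k\backslash\mathcal{I}}e^{-k^2/(4\alpha^2)}\boldsymbol{g}_i(\boldsymbol{k})$, hence $\mathbb{E}[\boldsymbol{\chi}_i]=0$.

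Next I would compute the variance. Because the $\boldsymbol{k}_\ell$ are i.i.d.\ and the deterministic part cancels in $\boldsymbol{\chi}_i$,
\[
\mathbb{E}|\boldsymbol{\chi}_i|^2 = \frac{1}{P}\Big(G\!\!\sum_{\boldsymbol{k}\in\mathcal{L}_k\backslash\mathcal{I}}\!\!e^{-k^2/(4\alpha^2)}|\boldsymbol{g}_i(\boldsymbol{k})|^2 - |\boldsymbol{F}_{\mathcal{F},i}^{\text{HF}}|^2\Big),
\]
in exact analogy with Eq.~\eqref{eq::charge_Var}. The structural point distinguishing this case from the $b_{\mathcal{F},j}$ one is that $|\boldsymbol{g}_i(\boldsymbol{k})|^2 = \frac{16\pi^2 q_i^2}{V^2 k^2}\big[\text{Im}(S_1 e^{-i\boldsymbol{k}\cdot\boldsymbol{r}_i})+\text{Im}(S_2 e^{-i\boldsymbol{k}\cdot\boldsymbol{r}_i})\big]^2$ carries an explicit $1/k^2$.

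Then I would invoke the DH theory. Under charge neutrality and the DH assumption I expect the phase-shifted total structure factor $\text{Im}(S_1(\boldsymbol{k})e^{-i\boldsymbol{k}\cdot\boldsymbol{r}_i})+\text{Im}(S_2(\boldsymbol{k})e^{-i\boldsymbol{k}\cdot\boldsymbol{r}_i})$ to be controlled by a size-independent constant $C_f$, so that $|\boldsymbol{g}_i(\boldsymbol{k})|^2\le 16\pi^2 q_i^2 C_f^2/(V^2 k^2)$. Dropping the non-positive term and passing from the lattice sum to the continuous integral (as justified in \cite{kolafa1992cutoff,LIANG2025101759}),
\[
\mathbb{E}|\boldsymbol{\chi}_i|^2 \lesssim \frac{G}{P}\frac{16\pi^2 q_i^2 C_f^2}{V^2}\frac{V}{(2\pi)^3}\int_{\omega}^{\infty}\frac{e^{-k^2/(4\alpha^2)}}{k^2}4\pi k^2\,\mathrm{d}k = \frac{G}{P}\frac{16\pi^2 q_i^2 C_f^2}{V^2}\frac{4\pi V}{(2\pi)^3}\alpha\sqrt{\pi}\,\erfc\!\Big(\frac{\omega}{2\alpha}\Big).
\]
The crucial observation is that the $1/k^2$ of the force kernel cancels the Jacobian $k^2$, so only $\int_\omega^\infty e^{-k^2/(4\alpha^2)}\,\mathrm{d}k = \alpha\sqrt{\pi}\,\erfc(\omega/(2\alpha))$ survives; this is precisely why the force variance carries a pure $\erfc$ factor rather than the $\omega e^{-\omega^2/(4\alpha^2)}+\erfc$ combination of Theorem~\ref{thm::var_bj_srbe}. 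Finally, taking $\alpha=\rho_{\text{ion}}^{1/3}=O(1)$ and using $G=O(V)$ at fixed densities collapses the prefactor to $O(q_i^2)$, independent of $N_{\text{ion}}$ and $N_{\text{ele}}$, which yields the claimed $O(\erfc(\omega/(2\alpha))/P)$ scaling.

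I expect the main obstacle to be rigorously establishing the uniform, size-independent DH bound on the combined ion--electrode structure factor. In contrast to the charge case, which involves only the ionic factor $S_1$, the force estimator also contains the electrode Gaussian factor $S_2$, whose charges $\boldsymbol{Q}$ are themselves fixed self-consistently through Eq.~\eqref{Q_eq} at every step. Bounding $S_2$ uniformly therefore demands showing that the induced electrode charges remain $O(1)$ under the DH/neutrality structure, which couples the force analysis back to the conditioning of the matrix $\boldsymbol{A}$; making this independent of $N$ is the delicate step, whereas the sampling-variance algebra and the sum-to-integral estimate are routine once the bound is in hand.
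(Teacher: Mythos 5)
Your proposal is correct and follows essentially the same route as the paper's proof: bound the DH-controlled structure factors by size-independent constants, write the single-step sampling variance as $\tfrac{1}{P}\bigl(G\sum_{\boldsymbol{k}\in\mathcal{L}_k\backslash\mathcal{I}}e^{-k^2/(4\alpha^2)}|\boldsymbol{g}_i(\boldsymbol{k})|^2-|\boldsymbol{F}_{\mathcal{F},i}^{\text{HF}}|^2\bigr)$, pass to the continuous integral where the $1/k^2$ kernel cancels the $k^2$ Jacobian to leave a pure $\erfc$ factor, and use $G=O(V)$ at fixed density. The ``delicate step'' you flag --- a uniform, $N$-independent bound on the electrode factor $S_2$ --- is exactly what the paper disposes of (somewhat heuristically) in Appendix~\ref{app::DH} by treating the fixed electrode sites as a lattice discretization of the Boltzmann distribution, yielding $|\mathcal{G}_2|\le\widetilde{C}_{\text{DH}}q$.
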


\begin{proof}
    Based on the derivation under the DH approximation, the structure factor term can be bounded, that is, for every Fourier modes $\bm{k}\neq \bm{0}$,
\begin{equation}
|\operatorname{Im}\left(e^{-i \boldsymbol{k} \cdot \boldsymbol{r}_{\boldsymbol{i}}} S_1(\boldsymbol{k})\right)| \leq C_1, \quad |\operatorname{Im}\left(e^{-i \boldsymbol{k} \cdot \boldsymbol{r}_{\boldsymbol{i}}} S_2(\boldsymbol{k})\right)| \leq C_2,
\end{equation}
where $C_1$ and $C_2$ are constants independent of $ N_{\text{ion}} $ and 
$ N_{\text{ele}}$. Then the variance of the random force provided by the SRBE can be estimated by
\begin{equation}
\label{eq::F_Var_estimate}
\begin{aligned}
\mathbb{E}\left|\boldsymbol{\chi_i}\right|^2&\leq  \frac{1}{P}\left(\sum_{\boldsymbol{k} \in \mathcal{L}_k\backslash\mathcal{I}} \frac{\left(4 \pi q_i\right)^2 G}{V^2 k^2} e^{-k^2 /(4 \alpha^2)}(C_1+C_2)^2-\left|\boldsymbol{F}_{\mathcal{F},i}^{\text{HF}}\right|^2\right)\\
&\leq \frac{G}{P} \frac{C_{I}(4\pi q_i)^2}{V^2}\int_{\omega}^{\infty} \frac{V}{(2\pi)^3}\frac{e^{-k^2/(4\alpha^2)}}{k^2}4\pi k^2 d k\\
&= \frac{8\sqrt{\pi}q_i^2C_{I}\alpha G}{PV}\erfc\left(\frac{\omega}{2\alpha}\right),
\end{aligned}
\end{equation}
where
\begin{equation}
    \label{eq::Force_HF}
\boldsymbol{F}_{\mathcal{F},i}^{\text{HF}}:=-\sum_{\boldsymbol{k}\in \mathcal{L}_k\backslash\mathcal{I}}\frac{4\pi q_i\boldsymbol{k}}{Vk^2}e^{-k^2/(4\alpha^2)}\Big(\text{Im}(S_1(\boldsymbol{k})e^{-i\boldsymbol{k}\cdot\boldsymbol{r}_i})+\text{Im}\left(S_2(\boldsymbol{k})e^{-i\boldsymbol{k}\cdot\boldsymbol{r}_i}\right)\Big)
\end{equation}
is the force contribution of the high-frequency Fourier modes in $\mathcal{L}_k\backslash \mathcal{I}$.
The constant $C_I/(C_1+C_2)^2$ is derived from the integral approximation. Similar with the proof in Theorem \ref{thm::var_bj_srbe}, we can obtain, $\mathbb{E}\left|\boldsymbol{\chi}_i\right|^2=O(\text{erfc}(\omega/(2\alpha))/P)$.
\end{proof}

We now discuss the effect of the stochastic algorithm on the overall dynamics of the CP simulation. In the canonical (NVT) ensemble, a common method for temperature control is the stochastic Langevin thermostat \cite{hoover1985canonical}, whose mathematical expression for the electrode/ionic liquid system is provided by
\begin{equation}
\left\{
\begin{aligned}
&d\boldsymbol{r}_i=\boldsymbol{v}_i d t\\
&m_id\boldsymbol{v}_i=(\boldsymbol{F}_i-\gamma \boldsymbol{v}_i)dt + \sqrt{2\gamma/\beta}d\boldsymbol{W}_i\\
%\boldsymbol{F}_i&=-\nabla_{\boldsymbol{r}_i}U_c(\{\boldsymbol{r}_i\},\boldsymbol{Q})
&\boldsymbol{Q} =\boldsymbol{J A}^{-1}(\Delta \psi \boldsymbol{d}+\boldsymbol{b}),
%d\boldsymbol{Q} & =\boldsymbol{J A}^{-1}\boldsymbol{b}(\{\boldsymbol{r}_i\})_t dt
\end{aligned}
\right.
\label{langevin_ori}
\end{equation}
where $m_i$ is the mass of the $i$-th particle, $\gamma$ is the reciprocal charactersitic time with respect to the thermostat, $\beta=1/k_BT$ with the temperature $T$ and Boltzmann constant $k_B$, and $\{\boldsymbol{W}_i\}$ are the independent and identically distributed Wiener processes. When the system is evolved using the SRBE method as formulated in Eq.~\eqref{langevin_ori}, the configuration $(\{\bm{r}_i^*
\},\{\bm{v}_i^*\},\bm{Q}^*)$ follows the dynamical process given by
\begin{equation}
\left\{
\begin{aligned}
&d\boldsymbol{r}^*_i=\boldsymbol{v}^*_i dt\\
&m_i d\boldsymbol{v}^*_i=(\boldsymbol{F}_i+\boldsymbol{\chi}_i-\gamma \boldsymbol{v}^*_i)dt + \sqrt{2\gamma/\beta}d\boldsymbol{W}_i\\
&\boldsymbol{Q}^{*} =\boldsymbol{J A}^{-1}(\Delta \psi \boldsymbol{d}+\boldsymbol{b}+\bm{\xi})
\end{aligned}
\right.
\label{langevin_rbe}
\end{equation}
with the same initial values as $(\{\bm{r}_i\},\{\bm{v}_i\},\bm{Q})$. Here $\bm{\xi}=(\xi_1,\cdots,\xi_{N_{\text{ele}}})^{T}$ denotes the fluctuation vector of electrode charge distribution. By the Euler-Maruyama scheme as the numerical integrator, the discrete update for the configuration reads
\begin{equation}
\left\{
\begin{aligned}
(\boldsymbol{r}_i^*)^{n+1}&=(\boldsymbol{r}_i^*)^{n}+(\boldsymbol{v}_i^*)^{n}\Delta t\\
(\boldsymbol{v}_i^*)^{n+1}&=(\boldsymbol{v}_i^*)^{n}+m_i^{-1}\left[\boldsymbol{F}_i^n+\bm{\chi}_i-\gamma (\boldsymbol{v}_i^*)^{n}\right]\Delta t + m_i^{-1} \sqrt{2\gamma/ \beta} \Delta \boldsymbol{W}_i^n\\
(\boldsymbol{Q}^*)^{n+1}&=\boldsymbol{J A}^{-1}(\Delta \psi \boldsymbol{d}+\boldsymbol{b}^n+\bm{\xi}),\\
\end{aligned}
\right.
\label{dis_SDE}
\end{equation}
where $\Delta \boldsymbol{W}_i^n=\boldsymbol{W}_i^{n+1}-\boldsymbol{W}_i^{n}$. To measure the difference between two configuration distributions provided by Eqs.~\eqref{langevin_ori} and \eqref{dis_SDE}, the Wasserstein-2 distance \cite{santambrogio2015optimal} is used
\begin{equation}
\mathscr{W}_2(\mu, \nu)=\left(\inf _{\gamma \in \Pi(\mu, \nu)} \int_{\mathbb{R}^d \times \mathbb{R}^d}|x-y|^2 d \gamma\right)^{1 / 2},
\end{equation}
where $\Pi(\mu, \nu)$ means all the joint distributions whose marginal distributions are $\mu$ and $\nu$, respectively. By evolving the same initial configuration $(\{\boldsymbol{r}_i^{0}\}, \{\boldsymbol{v}_i^{0}\}, \boldsymbol{Q}^{0})$ 
for a finite time $T$ under both the discrete dynamics in Eq.~\eqref{dis_SDE} 
and the original Langevin dynamics in Eq.~\eqref{langevin_ori}, 
the Wasserstein-2 distance between their terminal configuration distributions is characterized by Theorem~\ref{thm::evolution}. 
It is worth emphasizing that, in contrast to all-atom simulations, 
the electrode charge $\boldsymbol{Q}$ in CP simulations also exhibits randomness under the random batch method. 
Nevertheless, the proof follows a similar argument as in \cite{liang2023random,jin2020random,jin2021convergence,jin2022random}, 
demonstrating a half-order convergence of the configurations with respect to time evolution.

\begin{mythe}
\label{thm::evolution}
 Let $\boldsymbol{Y}(\bm{x}, \cdot)$ be the transition probability of the Eq.~\eqref{langevin_ori} and $\boldsymbol{Y}^*(\bm{x}, \cdot)$ be the transition probability computed by the random batch discretization in the Eq.~\eqref{dis_SDE}, where $\bm{x}$ denotes the initial configuration of the system. Assume the masses $\{m_i\}$ are bounded. If the forces $\{\boldsymbol{F}_i\}$ and vector $\bm{b}$ are bounded, Lipschitz continuous with $\mathbb{E} \bm{\chi}_i=\bm{0}$ and $\mathbb{E} \bm{\xi}=\bm{0}$, then for any $T>0$, there exists a constant $C=C(N_{ion},N_{ele},T)>0$ and $D\ge 1$ such that
\begin{equation}
\sup_{\bm{x}} \mathscr{W}_2(\boldsymbol{Y}(\bm{x}, \cdot), \boldsymbol{Y}^*(\bm{x}, \cdot)) \leq C \sqrt{\Lambda \Delta t+D\Delta t^2},
\end{equation}
where $\Lambda$ is an upper bound for $\left|\left|\mathbb{E}\left|\boldsymbol{\chi}_i\right|^2\right|\right|_{\infty}+\left|\left|\mathbb{E}\left|\boldsymbol{\xi}_j\right|^2\right|\right|_{\infty}$.
\end{mythe}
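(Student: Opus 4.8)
The plan is to establish the bound by a \emph{synchronous coupling} of the two processes followed by a discrete Gr\"onwall estimate, which is the standard route for strong error bounds of the random batch method. First I would evolve the exact Langevin dynamics \eqref{langevin_ori} and the random-batch scheme \eqref{dis_SDE} from the common initial state $\bm{x}$ driven by the \emph{same} Wiener paths $\{\boldsymbol{W}_i\}$. Under this coupling the joint law of the two endpoints is an admissible transport plan for $\boldsymbol{Y}(\bm{x},\cdot)$ and $\boldsymbol{Y}^*(\bm{x},\cdot)$, so $\mathscr{W}_2^2(\boldsymbol{Y}(\bm{x},\cdot),\boldsymbol{Y}^*(\bm{x},\cdot))\le \mathbb{E}\big[\sum_i(|\delta\boldsymbol{r}_i^N|^2+|\delta\boldsymbol{v}_i^N|^2)+|\delta\boldsymbol{Q}^N|^2\big]$, where $N=T/\Delta t$ and $\delta(\cdot)=(\cdot)^*-(\cdot)$ denotes the discrepancy between the two trajectories. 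It therefore suffices to bound the right-hand side.

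The second step is to eliminate the electrode charge as an independent unknown. Since the electrode atoms are fixed, $\boldsymbol{A}$ and hence $\boldsymbol{J}\boldsymbol{A}^{-1}$ is a constant, bounded, invertible matrix, so the last line of \eqref{dis_SDE} slaves the charge error to the configuration, $\delta\boldsymbol{Q}^n=\boldsymbol{J}\boldsymbol{A}^{-1}\big[(\boldsymbol{b}(\boldsymbol{r}^{*,n})-\boldsymbol{b}(\boldsymbol{r}^{n}))+\boldsymbol{\xi}^n\big]$. Using the assumed Lipschitz continuity of $\boldsymbol{b}$ and $\|\boldsymbol{J}\boldsymbol{A}^{-1}\|<\infty$ gives $\mathbb{E}|\delta\boldsymbol{Q}^n|^2\lesssim \mathbb{E}|\delta\boldsymbol{r}^n|^2+\mathbb{E}|\boldsymbol{\xi}^n|^2$. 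I would then substitute this slaved charge into the ion force $\boldsymbol{F}_i$, which depends on $\boldsymbol{Q}$ through the structure factor $S_2$, so that the stochastic charge fluctuation re-enters the velocity update only as a conditionally mean-zero forcing carrying a factor $\Delta t$.

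The third step is the error recursion. Subtracting the Euler--Maruyama update \eqref{dis_SDE} from the exact increment of \eqref{langevin_ori} over $[t_n,t_{n+1}]$, the additive noise $\sqrt{2\gamma/\beta}\,\Delta\boldsymbol{W}_i^n$ cancels identically, leaving three contributions: (i) Lipschitz drift differences bounded by $C\Delta t\,(\mathbb{E}|\delta\boldsymbol{r}^n|^2+\mathbb{E}|\delta\boldsymbol{v}^n|^2)^{1/2}$; (ii) the force fluctuations $\boldsymbol{\chi}_i^n$ and charge fluctuation $\boldsymbol{\xi}^n$, each entering with a factor $\Delta t$; and (iii) the deterministic drift-discretization residual, which for the bounded, Lipschitz drift contributes at per-step order $O(\Delta t^3)$ to the squared error (strong order one for additive noise). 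Writing $u^n=\mathbb{E}\big[\sum_i(|\delta\boldsymbol{r}_i^n|^2+|\delta\boldsymbol{v}_i^n|^2)\big]$ and squaring, the crucial cancellation is that a fresh mini-batch is drawn at every step, so by unbiasedness $\mathbb{E}[\boldsymbol{\chi}_i^n\mid\mathcal{F}_n]=\mathbf{0}$ and $\mathbb{E}[\boldsymbol{\xi}^n\mid\mathcal{F}_n]=\mathbf{0}$; the cross terms $\mathbb{E}\langle\delta^n,\boldsymbol{\chi}^n\rangle$ and $\mathbb{E}\langle\delta^n,\boldsymbol{\xi}^n\rangle$ then vanish, and the fluctuations survive only through their second moments, contributing at most $C\Lambda(\Delta t)^2$ per step. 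This yields $u^{n+1}\le(1+C\Delta t)\,u^n+C\Lambda(\Delta t)^2+CD(\Delta t)^3$, and the discrete Gr\"onwall inequality over the $N=T/\Delta t$ steps gives $u^N\le C_T\big(\Lambda\Delta t+D\Delta t^2\big)$, whence $\mathscr{W}_2\le C\sqrt{\Lambda\Delta t+D\Delta t^2}$ after taking the square root and folding in the slaved-charge bound.

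I expect the main obstacle to be the electrode-charge randomness, which has no analogue in the standard all-atom RBE convergence proofs of \cite{liang2023random,jin2021convergence}. One must verify that the instantaneously recomputed, stochastic charge $\boldsymbol{Q}^*$ preserves the martingale structure (conditional mean-zero, fresh independence across steps) that annihilates the first-order cross terms, and that its fluctuation influences the dynamics only after multiplication by $\Delta t$, so that the half-order scaling $\sqrt{\Lambda\Delta t}$ is not degraded to $\sqrt{\Lambda}$ by the direct appearance of $\boldsymbol{\xi}^n$ in $\boldsymbol{Q}^n$; this coupling of the slaved charge into the ion force is precisely the step requiring care. Finally, the size-independent variance prefactors established in Theorems~\ref{thm::var_bj_srbe} and~\ref{thm::1/P} ensure that $\Lambda$ itself remains controlled as $N$ grows, so that the estimate is informative even though $C=C(N_{\text{ion}},N_{\text{ele}},T)$ may depend on the particle numbers through the accumulated Lipschitz constants.
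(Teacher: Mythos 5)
The paper itself contains no written proof of Theorem~\ref{thm::evolution}: it only remarks that the argument follows the random batch convergence proofs of the cited references \cite{liang2023random,jin2020random,jin2021convergence,jin2022random}. Your synchronous-coupling-plus-discrete-Gr\"onwall plan --- with the cross terms annihilated by the conditional mean-zero property of $\boldsymbol{\chi}_i^n$ and $\boldsymbol{\xi}^n$, and the charge error slaved to the configuration through the constant bounded matrix $\boldsymbol{J}\boldsymbol{A}^{-1}$ --- is exactly that standard argument, and you correctly isolate the one genuinely new point (the stochastic charge fluctuation feeding back into the ion force through $S_2$) that the paper also identifies as the departure from the all-atom case.
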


Finally, the total computational complexity of the SRBE method per simulation step is composed of two main contributions: the calculation of electrolyte ion dynamics and the adjustment of electrode charges. For the electrolyte ion dynamics, the complexity of the near part is 
\begin{equation}
C_{\mathcal{R}}=\frac{4}{3}\pi r_c^3 \rho_{\text{ion}}N_{\text{ion}}, 
\end{equation}
where $r_c$ is the cutoff radius. Since the ion charge structure factor $S_1(\bm{k})$ are computed once for $2M+P$ modes and then reused for all $N_{\text{ion}}$ ions, the complexity of this Fourier part is
\begin{equation}
C_{\mathcal{F}}=2MN_{\text{ion}}+PN_{\text{ion}}.
\end{equation}
Analogously,  $S_1(\boldsymbol{k})$ is computed in the electrolyte ion dynamics and can be used. The complexity of the electrode charge is 
\begin{equation}
C_{\mathcal{C}}=2MN_{\text{ele}}+PN_{\text{ele}}.
\end{equation}
Typically, one chooses $\alpha=\rho_{\text{ion}}^{1/3}$, then the complexity of near part $C_{\mathcal{N}}$ is $O(N_{\text{ion}})$ \cite{jin2021random}. The truncation term is selected by $M=\lambda\mu$ less than the sampling number $P$. Meanwhile, by the variance analysis of Eqs.~\eqref{eq::charge_Var} and \eqref{eq::F_Var_estimate}, $P$ can be selected to be an $O(1)$ constant independent of the total number of particles $N=N_{\text{ion}}+N_{\text{ele}}$ once the density is fixed \cite{jin2021random,liang2022superscalability,irbe2022jpca}. The overall computation complexity of the SRBE method becomes 
\begin{equation}
C_{\text{all}}=C_{\mathcal{R}}+C_{\mathcal{F}}+C_{\mathcal{C}}=O(N).
\end{equation}

Moreover, the proposed SRBE method offers two primary advantages for simulating electrode/ionic liquid systems. First, it substantially reduces the variance of electrode charge fluctuations in highly anisotropic systems, where the scale of a certain dimension is much larger than the others. Second, it inherits the advantage of the RBE with a total computational complexity of $O(N)$, ensuring high efficiency and scalability in large-scale simulations.

\section{Results and discussions}
\label{sec::result}
In this section, we conduct numerical experiments under constant potential conditions and perform a comparative analysis against the PPPM algorithm \cite{ahrens2021constant}. The evaluation focuses on quantifying both numerical accuracy and computational efficiency across these methods. All simulations were carried out using our method implemented in LAMMPS \cite{thompson2022lammps} (version 23June2022), and executed on the ``Siyuan Mark-I"
cluster at Shanghai Jiao Tong University, which consists of 936 nodes, each equipped with 2 $\times$ Intel Xeon ICX Platinum 8358 CPUs (2.6 GHz, 32 cores) and 512 GB of memory.

The simulated system consists of a supercapacitor with graphene as the electrode material and a coarse-grained ionic liquid ($[\text{BMim}^+][\text{PF}_6^{-}]$) as the electrolyte solution \cite{merlet2013simulating}. The length, width and height of the unit cell are 32.3 $\textup{\AA}$, 34.4 $\textup{\AA}$, and 136 $\textup{\AA}$, respectively, and it contains 320 $[\text{BMim}^+][\text{PF}_6^{-}]$ molecules and 3136 carbon atoms. The short-range Ewald part and  the Lennard-Jones (LJ) interaction is calculated by the Verlet list method \cite{verlet1967computer}, with a cutoff distance of $r_c=16~\textup{\AA}$. The long-range interaction is the electrostatic interaction, where the PPPM algorithm and the SRBE method are used for comparison and verification respectively. The bandwidth $\eta$ is 0.556 $\textup{\AA}$ and splitting parameter $\alpha$ is 0.187 $\textup{\AA}^{-1}$. To ensure the accurate calculation of charges of electrode atoms in the CP simulation, the calculation limit error selected for the PPPM algorithm is  $10^{-6}$ \cite{sitlapersad2024simple,langford2022constant,lin2024molecular}. The vacuum layer zero-padding ratio in the non-periodic direction ($z$-direction) is set to be $\lambda=3$, thus the aspect ratio of the selected electrode/ionic liquid system is roughly $[\lambda\mu]=13$, where $[\cdot]$ denotes the floor function. The Shake constraint algorithm \cite{krautler2001fast} is applied to maintain the rigid structure of $[\text{BMim}^+][\text{PF}_6^{-}]$ molecules during the CP simulation. The simulated system is under the canonical ensemble, maintaining the temperature of the electrolyte at $500~K$, and the unit time step is $\Delta t=1~fs$. A pre-equilibrium of $2\times 10^6$ steps ($2~ns$) is firstly carried out, followed by a statistical equilibrium of $3\times 10^6$ steps ($3~ns$).

\subsection{Accuracy performance}
\label{sec::accuracy}

For the supercapacitor system under consideration, we first investigate the effect of the truncation parameter 
$M$ for low-frequency Fourier modes on the accuracy. A potential difference of $\Delta\psi=2~V$ is applied across the two electrodes. We consider the electrode charge at the equilibrium stage as well as the evolution curve of the atomic charge on the electrode. For the SRBE method, we examine the effect of the truncation parameter $M = 0, 2, 5, 13,$ and $20$, and use batch sizes $P = 50, 100,$ and $200$ for sampling high-frequency Fourier modes. The results are summarized in Table~\ref{cpm_srbe_charge}, and the reference value and standard deviation of the equilibrium charge on the negative electrode are obtained using the PPPM method, yielding $-3.233$ and $0.266$, respectively. The data in the table demonstrate that, regardless of the chosen batch size for sampling high-frequency Fourier modes, increasing the truncation parameter $M$ from $0$ to $13$ leads to charge means and fluctuations that increasingly align with the reference values. When $M$ exceeds the aspect ratio $[\lambda\mu] = 13$, further increasing $M$ yields no significant improvement in accuracy, in consistent with the theoretical analysis in Section~\ref{sec::SRBE}. Notably, the case $M = 0$ degenerates to the classical RBE method without the symmetry-preserving technique. This setting performs poorly in capturing the charge distribution, especially in small-batch scenarios, where large fluctuations may prevent stable simulations. These results highlight the critical importance of incorporating symmetry-preserving technique in the development of stochastic algorithms for electrode/ionic liquid systems. Figure~\ref{srbe_pppm_charge_cpm} shows the dynamic atomic charge results, where the SRBE method is also capable of accurately capturing the system’s dynamic behavior. In contrast, using a smaller truncation parameter $M$, leads to significant fluctuations during time evolution, which hinders the accurate representation of the system properties.

\begin{table}[htbp]  \caption{The average value (Ave) and standard deviation (SD) of negative electrode charge at equilibrium stage by the SRBE method of different truncation terms. ``/" indicates that the corresponding configuration fails to achieve a stable CP simulation.}

\centering
	\vspace{1.0em}
	\begin{tabular}{lcccccc}
		
		\toprule[1.5pt]
		%\hline
		\vspace{0.3em}
		& $M$ &$0$& $2$ & $5$ & $13$ & $20$	 \\
	   \midrule[1pt]
       $P=50$& Ave	& /&-4.791 & -3.847 & -3.377 & -3.340     \\
		
	&SD	& /&0.848 &  0.387 & 0.241  &0.260     \\
		
	$P=100$& Ave	& /&-4.462 & -3.665 & -3.311 & -3.357     \\
		
	&SD	& /&0.774 &  0.377 & 0.240  &0.275     \\
% \midrule[1pt]
 $P=200$&Ave	& -32.124&-3.834 & -3.427 & -3.358  &-3.412    \\
		
	&SD	& 27.626& 0.566 &  0.288 &0.251  &0.263     \\
		
		\bottomrule[1.5pt]
		
	\end{tabular}
	
	\label{cpm_srbe_charge}
	
\end{table}

\begin{figure}[h!]
		\centering
\includegraphics[width=0.6\linewidth]{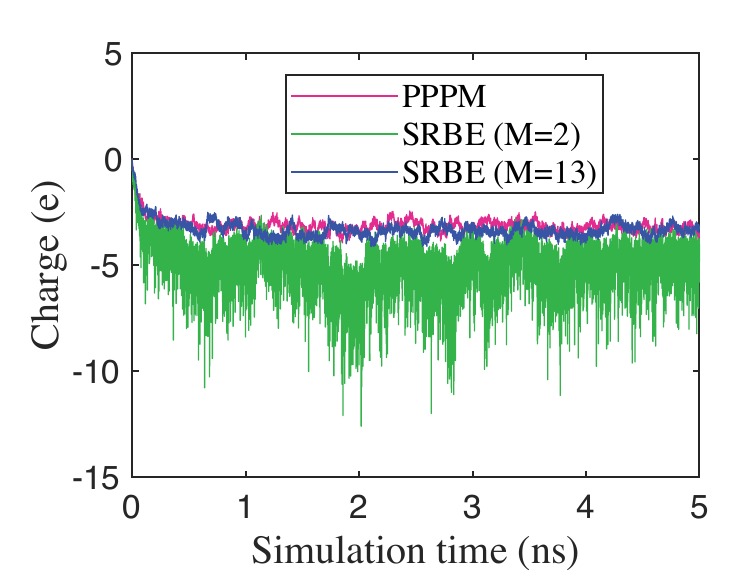}
		\caption{The evolution profile of atomic charge on the negative electrode by the SRBE method ($P=50$) with different truncation parameters $M=2$ and $M=13$, compared to the PPPM method.}
		\label{srbe_pppm_charge_cpm}
\end{figure}

Next, we examine accuracy as a function of the high-frequency batch size $P$ in this system. To ensure accuracy, the truncation parameter is fixed as $M=13$. For different batch sizes $P = 20, 50, 100$, and $200$, the mean and variance of the per-atom energy are shown in Figure~\ref{srbe_P_pppm_charge_cpm}. When low-frequency Fourier modes are accurately computed by the SRBE, a batch size of $P \ge 50$ is sufficient to reliably reproduce both the system energy and its fluctuation. To further assess the SRBE method in capturing more challenging physical quantities specific to the electrode/ionic liquid system, such as the distribution of free ion density in supercapacitors and the electric double layer structures near both electrodes, we conducted simulations using SRBE with $P=50$. The results are presented in Figure~\ref{srbe_pppm_density_cpm}. One observes that, regardless of whether a low voltage ($0.2~V$, (a)(d)), moderate voltage ($2~V$, (b)(e)), or high voltage ($5~V$, (c)(f)) is applied to the two electrodes, the SRBE method with this parameter setting achieves accuracy comparable to that of the traditional PPPM approach. This demonstrates both the applicability and reliability of the newly designed method for such simulations.

\begin{figure}[h!]
		\centering
\includegraphics[width=0.6\linewidth]{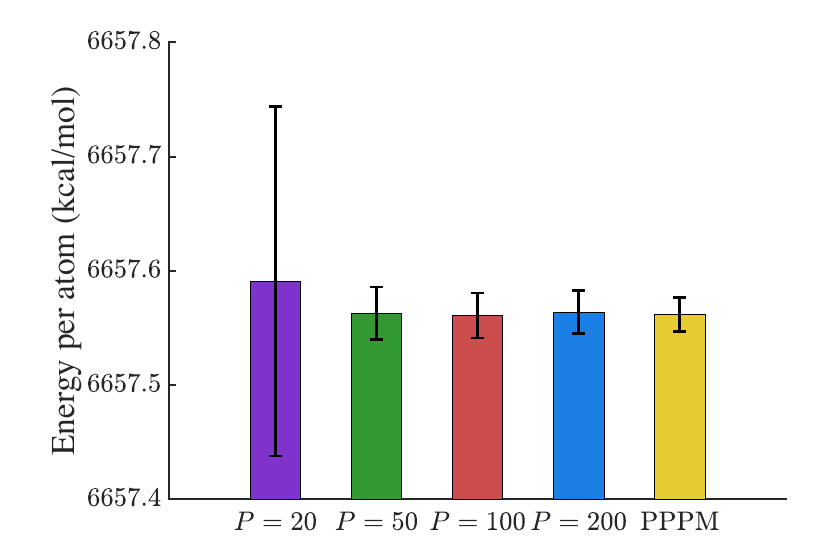}
		\caption{The total energy per atom and the errorbar at equilibrium stage by the SRBE algorithm with different batch sizes $P=20,50,100$ and $200$, compared to the PPPM method.}
\label{srbe_P_pppm_charge_cpm}
\end{figure}

\begin{figure}[h!]
		\centering
\includegraphics[width=1.0\linewidth]{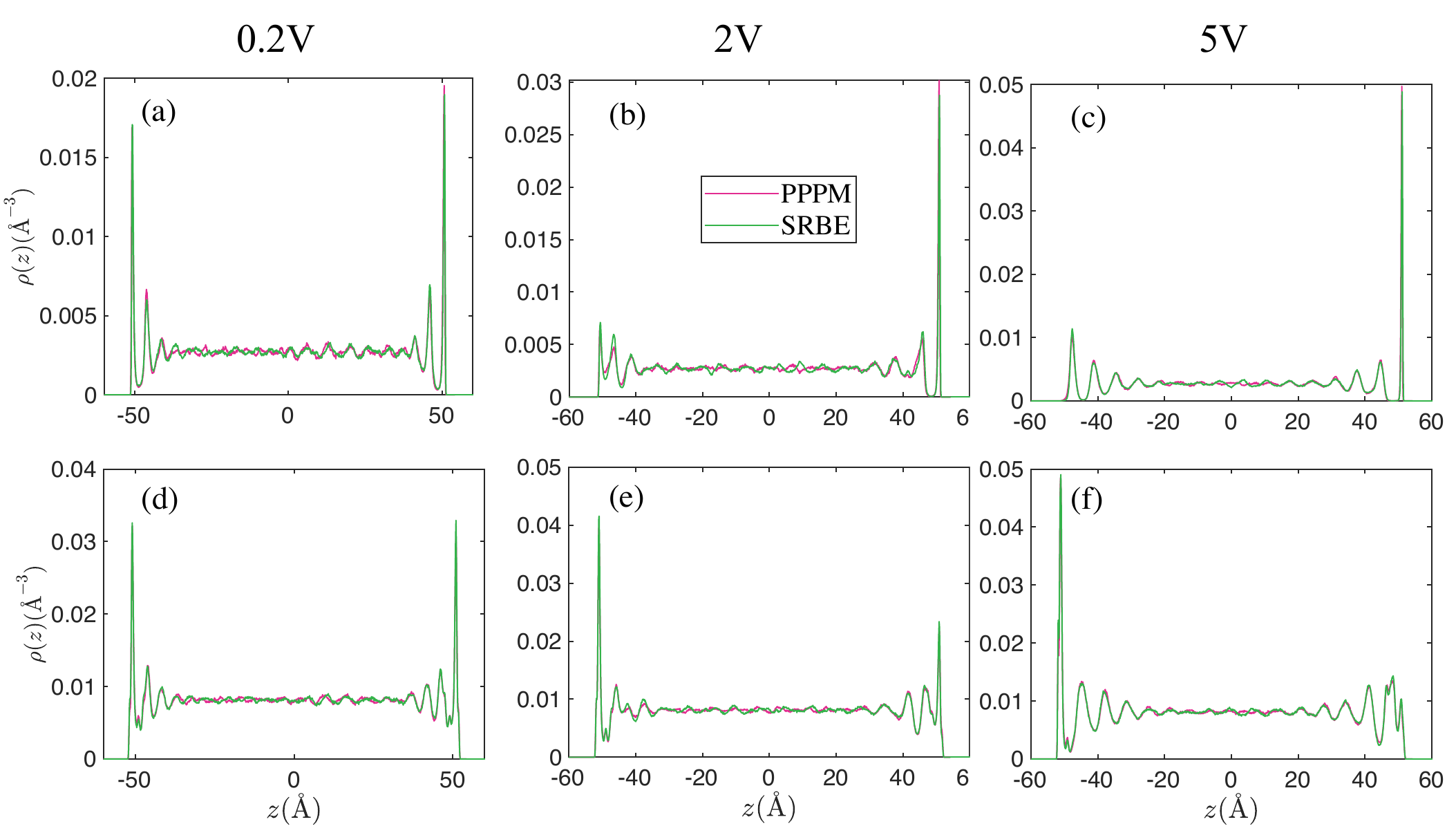}

		\caption{The number density distribution profile $\rho$ of the cation (a-c), anion  (d-f) in solution along the direction of $z$ at a voltage of $0.2~V$, $2~V$ and $5~V$. The long-range part of charge adaption vector $\bm{b}$ is calculated by the SRBE method with $P=50$ and $M=13$, compared to the PPPM method.}
\label{srbe_pppm_density_cpm}
\end{figure}

\subsection{Time performance}
\label{sec::time}

In this subsection, we test the efficiency for the long-range part of the SRBE algorithm. To access a fair comparison, parameters of the SRBE method is set to be $P=50$ and $M=13$, which achieves the same accuracy level as the PPPM method, as presented in Section~\ref{sec::accuracy}. We use the elapsed wall-clock time to evaluate the performance of different algorithms, and estimate the average CPU time per step by running 1000 simulation steps.

First, we investigate how the time scaling of the SRBE method changes with increasing system size $N$. Since anisotropy plays a crucial role in CP simulations of electrode/ionic liquid systems, the system size is scaled while keeping the aspect ratio $\lambda\mu$ fixed. Figure~\ref{srbe_P_pppm_srbe_P50_orderN} presents a comparison of the time consumption between the PPPM and SRBE methods executed on 64 CPU cores, where the number of particles $N$ is scaled from $3\times 10^4$ to nearly $2\times 10^6$. The solid lines represent linear fittings of the data on a log-log scale, while the dashed lines indicate the trend of $\mathcal{O}(N)$ growth for slope comparison. The test results clearly demonstrate that the SRBE method exhibits an efficient linear complexity with a considerable computation savings for the long-range contribution, highlighting the appealing performance of the algorithm.

\begin{figure}[h!]
		\centering
\includegraphics[width=0.6\linewidth]{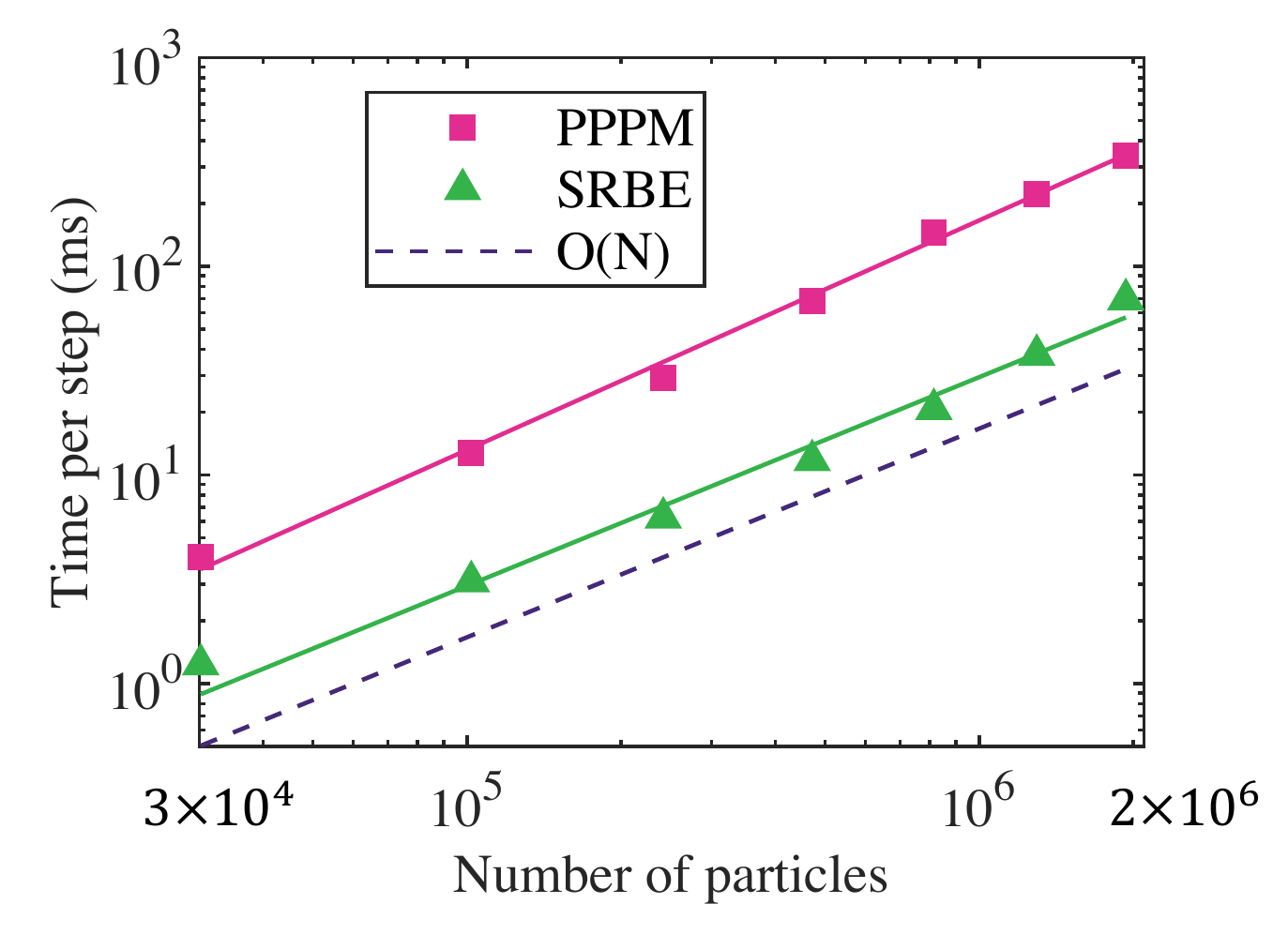}
		\caption{The average calculation time per step  of the long-range interaction part
with increasing total number of particles $N$. The number of CPU cores is fixed with $\zeta=64$. }
\label{srbe_P_pppm_srbe_P50_orderN}
\end{figure}

Another key criterion for evaluating whether an algorithm is suitable for large-scale parallel MD simulations is the scalability, which quantitatively reflects the communication overhead of the method. In the time experiments of this subsection, we evaluate both the weak and strong scalability of the SRBE method and compare its performance with the mainstream FFT-based PPPM method. Weak scalability is evaluated by keeping the average system size per processor constant while proportionally increasing both the total system size and the number of processors, and observing the resulting change in computation time. The weak scalability  is defined by
$\eta^{\text{weak}}(\zeta)=T_{\text{min}}/T(\zeta),$
where $\zeta$ denotes the number of processors, $T(\zeta)$ represents the corresponding time consumption, and $T_{\text{min}}=T(\zeta_{\text{min}})$ denotes the time consumption of the minimal number of processors $\zeta_{\text{min}}$. The value of this metric closer to $1$ indicates better weak scalability for the method \cite{gustafson1988reevaluating}.
 By fixing the system size per core at $N_{\text{core}} = 3776$ and utilizing up to $\zeta_{\text{max}} = 343$ CPUs, the time consumption and weak scalability of the SRBE method compared to the PPPM method are presented in Figure~\ref{srbe_pppm_time_strong_weak}(a-b). The SRBE method consistently maintains over $60\%$ weak scalability for the long-range part, while the PPPM method drops to nearly $10\%$ under parallel computation. 
 
Different from the weak scalability, strong scalability $\eta^{\text{strong}}(\zeta)$ is defined under the condition of a fixed total system size, by increasing the number of processors $\zeta$ to evaluate the corresponding performance,
$\eta^{\text{strong}}(\zeta)=(T_{\text{min}}\zeta_\text{min})/(\zeta T(\zeta)), $
where similarly, the value closer to 1 indicates a stronger advantage of the method in large-scale parallel computing. Figure~\ref{srbe_pppm_time_strong_weak}(c-d) present a comparison of the strong scalability between the SRBE and PPPM methods, where the system size is fixed at $N=101952$ and $\zeta$ from $1$ to $512$. It is observed that in the computation of the long-range part increases, the SRBE method consistently outperforms the PPPM method in terms of computational time and strong scalability. As $\zeta$ increases, this advantage of the SRBE method becomes increasingly pronounced, reaching over two orders of magnitude improvement in scalability at $\zeta = 512$ compared to the PPPM method, which is dropped below $1\%$ due to the use of communication-intensive FFT. These two scalability results highlight the superior parallel performance of the symmetry-preserving random batch strategy for large-scale CP simulations in electrode/ionic liquid systems, demonstrating its strong potential for applications in fields such as supercapacitor design.

\begin{figure}[h!]
		\centering
\includegraphics[width=1.0\linewidth]
{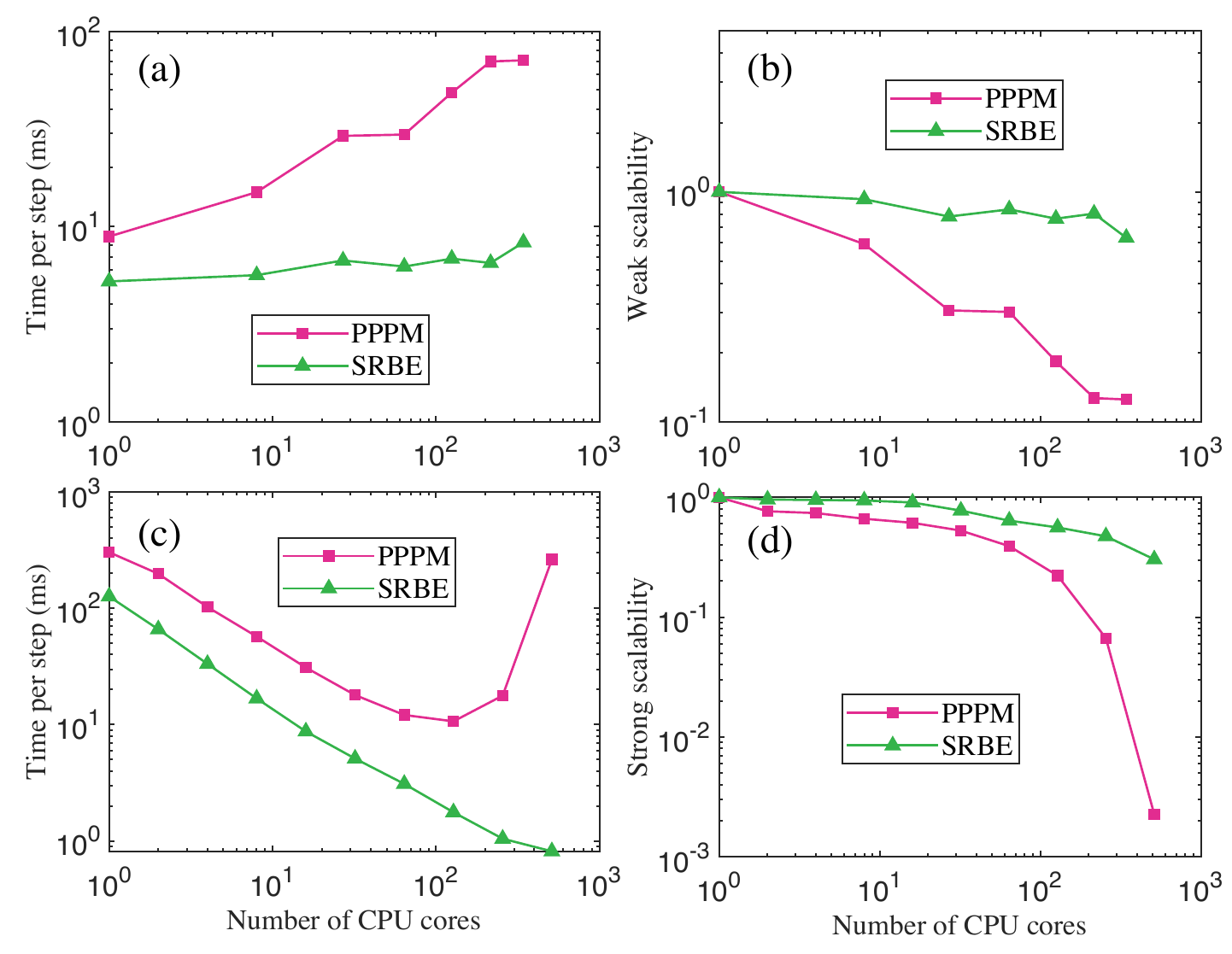}
		\caption{The average calculation time per step (a) and weak scalability (b) of the long-range interaction part with the fixed average number of particles per CPU core $N_{\text{core}}=3776$. The average calculation time per step (c) and strong scalability (d) of the long-range interaction part with fixed total number of particles $N=101952$. The parameters of the SRBE method are selected by $P=50$ and $M=13$.}
	\label{srbe_pppm_time_strong_weak}
\end{figure}

\section{Conclusion}
\label{sec::conclusion}
We have developed a novel SRBE method to accelerate CP simulations, which are critical for advancing the understanding of electrochemical interface phenomena.
 This algorithm builds upon the RBE method and incorporates the SPMF condition to accelerate the update of long-range part of the electrode charge fluctuations and forces acting on particles. Through rigorous theoretical analysis, it is demonstrated that selecting appropriate truncation terms can effectively eliminate charge fluctuation errors caused by random sampling, while ensuring the overall computational complexity remains $O(N)$. Numerical results confirm that the SRBE algorithm can accurately reproduce the discharge process, the total energy fluctuations of the system at equilibrium, as well as the electric double-layer structure near the electrode. In terms of efficiency, the SRBE algorithm exhibits one to two orders of magnitude improvement in multicore scalability compared with the PPPM method.
 These results strongly demonstrate the accuracy and efficiency of the SRBE algorithm, highlighting its great potential for large-scale constant-potential simulations in practical applications such as nanoporous electrode design.

\section*{Acknowledgments}
This work is funded by the National Natural Science Foundation of China (grants No. 12426304, 12325113 and 12350710181), and the Science and Technology Commission of Shanghai Municipality (grant No. 23JC1402300). The authors also acknowledge the support from the SJTU Kunpeng \& Ascend Center of Excellence.

\section*{Declarations}
\subsection*{Conflict of interest}
The authors declare that they have no conflict of interest.

\subsection*{Data availability}
The data that support the findings of this study are available from the corresponding author upon reasonable request.

\appendix
\section{The Debye-H\"{u}ckel theory}
\label{app::DH}

This appendix provides the error bounds of
\begin{equation}
    \label{eq::error_structure}
    \mathcal{G}_1(\bm{k},\bm{r}_i)=\sum_{j=1}^{N_{\text{ion}}}q_je^{i\bm{k}\cdot\bm{r}_{ij}},\quad \mathcal{G}_{2}(\bm{k},\bm{r}_i)=\sum_{j=1}^{N_{\text{ele}}}Q_je^{i\bm{k}\cdot(\bm{r}_i-\bm{R}_j)}
\end{equation}
under the DH theory \cite{hansen2013theory}.
First, we consider the contribution of the electrolyte structure factor, denoted as $\mathcal{G}_1(\bm{k},\bm{r}_i)$. For convenience, we examine a system of $N$ hard-sphere ions with charges $\pm q$ and radius $a>0$. Fixing one ion with charge $q$ at the origin, we then analyze the distribution of the surrounding ions. Inside the excluded hard-sphere region $\mathring{B}(a)=\{0<r<a\}$, no other ions are present; hence, the potential function satisfies the Laplace equation that $-\nabla^2\phi=0$. Outside the excluded region $\mathring{B}(a)=\{0<r<a\}$, the charges are approximately distributed according to the Boltzmann distribution $\rho_{\pm}(\bm{r})=\pm q\rho_{\pm,\infty}e^{\mp \beta q\phi}$, where $\rho_{\pm,\infty}=N/2V=\rho/2$ and $\beta=1/k_BT$ denotes the Boltzmann constant. Hence the linearized Poisson equation reads
\begin{equation}
\label{eq::Debye}
    -\nabla^2\phi=q\rho_{\infty,+}e^{-\beta q\phi}-q\rho_{\infty,-}e^{\beta q\phi}\approx -\kappa^2\phi\triangleq\rho(\bm{r}),
\end{equation}
where $\kappa=\sqrt{\beta q^2\rho}$ is the inverse of Debye length $\lambda_D$. The solution to Eq.~\eqref{eq::Debye} gives  
\begin{equation}
    \phi(r)= \begin{cases}\dfrac{q}{4 \pi  r}-\dfrac{q \kappa}{4 \pi (1+\kappa a)}, & r<a, \\ \dfrac{q e^{\kappa a} e^{-\kappa r}}{4 \pi r(1+\kappa a)}, & r\geq a ,\end{cases}
\end{equation}
which provides the bound of $\mathcal{G}_1(\bm{k},\bm{r}_i)$ as
\begin{equation}
    \label{eq::bound_G1}
    \begin{aligned}
        |\mathcal{G}_1(\bm{k},\bm{r}_i)|&\approx\left|q+\int_{\mathbb{R}^3\backslash\mathring{B}(a)}\rho(\bm{r})e^{i\bm{k}\cdot\bm{r}}\right|\\
        &\le q\left|1-\frac{1}{(1+\kappa a)(1+k^2/\kappa^2)}\left(\frac{\kappa}{k}\sin(ka)+\cos(ka)\right)\right|.
    \end{aligned}
\end{equation}
Eq.~\eqref{eq::bound_G1} shows a uniform upper bound of $|\mathcal{G}_1(\bm{k},\bm{r}_i)|\le C_{\text{DH}}q$, where the constant $C_{\text{DH}}$ is independent of ion number $N_\text{ion}$ or Fourier mode $\bm{k}\in \mathcal{L}_k\backslash\mathcal{I}$. 

For $\mathcal{G}_{2}(\bm{k},\bm{r}_i)$, since the charges on the electrode are fixed at uniformly distributed positions, their distribution can be regarded as a lattice discretization of the continuous Boltzmann distribution. Consequently, the same analytical considerations as in Eq.~\eqref{eq::bound_G1} apply, and a similar uniform bound holds, namely, $|\mathcal{G}_{2}(\bm{k},\bm{r}_i)| \leq \widetilde{C}_{\text{DH}} q$, where $\widetilde{C}_{\text{DH}} $ is independent of electrode atom number $N_\text{ele}$  or Fourier mode $\bm{k}\in \mathcal{L}_k\backslash\mathcal{I}$.

%\bibliographystyle{plain}
%\bibliographystyle{unsrt}
%\bibliographystyle{elsart-num}
%\bibliography{reference.bib}

\end{document}